\newtheorem{theorem}{Theorem}
\newtheorem{proposition}[theorem]{Proposition}
\newtheorem{lemma}[theorem]{Lemma}
\newtheorem{corollary}[theorem]{Corollary}
\theoremstyle{definition}
\newtheorem{remark}[theorem]{Remark}
\numberwithin{theorem}{section} 
\newcommand{\nwc}{\newcommand*}
\nwc{\beq}{\begin{eqnarray}}
\nwc{\eeq}{\end{eqnarray}}
\numberwithin{equation}{section}
\def\C{\mathbb{C}}
\def\R{\mathbb{R}}
\def\T{\mathbb{T}}
\def\Z{\mathbb{Z}}
\renewcommand{\subset}{\subseteq}
\renewcommand{\tilde}{\widetilde}
\renewcommand{\epsilon}{\varepsilon}
\renewcommand{\Re}{\text{Re}}
\def\range{{\rm range}}
\newcommand{\commentout}[1]{}
\newcommand{\bfa}{{\boldsymbol a}}
\newcommand{\bfb}{{\boldsymbol b}}
\newcommand{\bfe}{{\boldsymbol e}}
\newcommand{\bft}{{\boldsymbol t}}
\newcommand{\bfu}{{\boldsymbol u}}
\newcommand{\bfx}{{\boldsymbol x}}
\newcommand{\bfy}{{\boldsymbol y}}
\newcommand{\bfz}{{\boldsymbol z}}
\newcommand{\bfzero}{{\boldsymbol 0}}
\newcommand{\bfphi}{{\boldsymbol \phi}}
\newcommand{\calE}{\mathcal{E}}
\newcommand{\calH}{\mathcal{H}}
\newcommand{\calJ}{\mathcal{J}}
\newcommand{\calP}{\mathcal{P}}
\newcommand{\calT}{\mathcal{T}}
\newcommand{\Gr}{\mathrm{Gr}}
\newcommand{\diag}{{\rm diag}}
\newcommand{\poly}{{\rm poly}}
\def\wherespace{\quad\text{where}\quad}
\def\forallspace{\quad\text{for all}\quad}
\def\andspace{\quad\text{and}\quad}
\newcommand{\norm}[1]{{\big\|#1\big\|}}
\newcommand{\round}[1]{{\left(#1\right)}}
\numberwithin{equation}{section}
\begin{document}
	
	\title{Optimal Structured Approximation of Fourier Subspaces, Toeplitz Matrices, and Exponential Sums}
	
	\author{Albert Fannjiang\thanks{University of California, Davis. Email: cafannjiang@ucdavis.edu} \and Weilin Li\thanks{City University of New York, City College and Graduate Center. Email: wli6@ccny.cuny.edu}}

	\maketitle

	\begin{abstract}
		This paper studies three structured approximation problems: (1) Recovering the range of a Fourier matrix from a single observation, (2) Recovering a corrupted low-rank Toeplitz/Hankel matrix, and (3) Recovering a finite exponential sum from noisy samples. All three problems are computationally challenging because their structural constraints are difficult to enforce directly. We show that all three tasks can be solved efficiently and optimally by applying the Gradient-MUSIC algorithm for spectral estimation. To provide an example, for a rank-$r$ Toeplitz matrix $T\in {\mathbb C}^{n\times n}$ that satisfies a regularity assumption and is corrupted by an arbitrary $E\in {\mathbb C}^{n\times n}$ such that $\|E\|_2\leq \alpha n$, our algorithm outputs a Toeplitz matrix $T_\sharp$ of rank exactly $r$ such that $\|T-T_\sharp\|_2 \leq C \|E\|_2$, where $C,\alpha>0$ are absolute constants. This performance guarantee is minimax optimal in $n$, $r$, and $\|E\|_2$. For the other two structured approximation problems, we also provide algorithms that are minimax optimal in the number of samples, rank/sparsity, and noise level. At the heart of this paper is a quantitative transference principle which shows how to convert computational methods and theory for spectral estimation into corresponding methods and theory for the other three problems. 
	\end{abstract}
	
	\noindent 
	{\bf 2020 MSC:} 15A29, 15B05, 42A70, 42C15, 65F55 
	
	\noindent
	{\bf Keywords:} Toeplitz, Hankel, structured approximation, subspace, MUSIC, spectral estimation, exponential sums, sampling theorem
	
	
	\setcounter{tocdepth}{1}
	\tableofcontents

		\section{Introduction}

Sampling and approximation theory asks how an object (e.g., function, matrix, etc.) may be reconstructed from finitely many pieces of information (e.g., measurements, samples). A basic example is the exponential sum $f\colon\R\to\C$,
\[
f(t)=\sum_{k=1}^r a_k e^{itx_k},
\qquad t\in\mathbb R,
\]
where each frequency \(x_k\in\mathbb T:=\mathbb R/2\pi\mathbb Z\) lies on the continuum, each
amplitude \(a_k\in\mathbb C\) is nonzero.  If $f$ is sampled at integer times \(t=0,\dots,n-1\) and perturbed by additive noise $\bfz\in \C^n$, the model becomes
\[
\tilde \bfy:=\bfy+\bfz, \qquad 
\bfy:=\Phi(\bfx)\bfa, \qquad
\Phi(\bfx)
:= \Phi_n(\bfx)
:=
\bigl[e^{ijx_k}\bigr]_{j=0,\dots,n-1;\,k=1,\dots,r}.        
\]
By a scaling argument, without loss of generality, we assume that $|a_j|\geq 1$ for all $j$. 

The {\it spectral estimation problem} is to recover the frequencies $\bfx$ from samples $\tilde\bfy$. Unlike the setting of Fourier series, the unknown function is not bandlimited on a fixed frequency grid and the spectral nodes themselves are off-grid and must be inferred from the data, which makes approximation of $\bfx$ inherently a nonlinear inverse problem. The minimum separation
\[
\Delta(\bfx)=\min_{j\neq k}|x_j-x_k|_{\mathbb T}
\]
governs the stability of this inverse problem. Roughly speaking, we say $\bfx$ is {\it well-separated} if for a big enough $\beta\geq 1$, it holds that
\[
\Delta(\bfx)\geq \frac {2\pi \beta} n.
\]
In which case, the {\it Fourier matrix} $\Phi_n(\bfx)$ has a uniformly bounded condition number \cite{aubel2019vandermonde} and minimax optimal estimation of $\bfx$ is possible through the Gradient-MUSIC algorithm \cite{fannjiang2025optimality}.

In this paper, we use spectral estimation as the core problem and Gradient-MUSIC as the algorithm of our choice, in order to show how the following three structured approximation problems can be solved in a minimax optimal manner. Let us mention right now that this paper is organized in such a way that one may understand any of the following three problems independent from the remaining ones. 

\begin{enumerate}
	\item 
A {\it Fourier subspace} is a subspace of the form
\[
U(\bfx):=\range \big(\Phi_n(\bfx)\big) \subset\C^n.
\]
Given measurement $\tilde\bfy$, {\it Fourier subspace estimation} asks to find a Fourier subspace $\tilde U\subset \C^n$ of dimension $r$ that approximates $U$ (with respect to either the spectral  or Frobenius distance). This problem is only nontrivial due to the structural requirement that $\tilde U$ is a Fourier subspace. 

Many procedures in data analysis and signal processing depend primarily on a leading subspace rather than the full matrix. This is true, for example, in classical subspace methods for array signal processing \cite{krim1996two}, in principal component analysis, and in many spectral
methods in data science \cite{chen2021spectral}. 

\item 
Consider a rank-$r$ Toeplitz matrix that enjoys the factorization
$$
T=\Phi_n(\bfx)\diag(\bfa)\Phi_n(\bfx)^*\in \C^{n\times n}.
$$
If $E$ is an arbitrary perturbation, given
\[
         M:=T+ E\in\mathbb C^{n\times n},
\]
the {\it low-rank Toeplitz matrix approximation} problem is to find a rank-$r$ Toeplitz matrix $\tilde T$ that approximates $T$ (with respect to either the spectral or Frobenius norm). Of course, this problem is only nontrivial due to the requirement that $\tilde T$ is also Toeplitz. The same question can be asked for a Hankel matrix $H$ instead of Toeplitz $T$, and they are entirely equivalent problems\footnote{Indeed, if $J$ denotes the exchange matrix,
then $T J$ is Hankel whenever $T$ is Toeplitz.}.

Both Toeplitz and Hankel matrices encode
translation-invariant sampling relations, and are fundamental in deconvolution, system
identification, array processing, inverse scattering, spectral compressed sensing, and related
imaging problems, see \cite{hansen2002deconvolution,paduart2010identification,chen2013spectral,wang2016seismic,ongie2017fast,ying2018vandermonde,li2021stable,xu2021sparse,lawal2021toeplitz}. One underlying principle that is exploited in these works is that Toeplitz and Hankel matrices are often approximately low-rank and have additional algebraic structure. Although this computational approximation has received substantial interest from a wide audience, see \cite{cybenko1982moment,rosen1996total,chu2003structured,gillard2023hankel,musco2024sublinear}, we are not aware of an optimal algorithm for this problem. 

\item 
Given measurement $\tilde\bfy$, the {\it approximation of exponential sums from noisy data} problem is to find an exponential sum $\tilde f$ with exactly $r$ terms that approximates $f$ (in either the $L^2([0,n])$ or $\ell^2(\{0,1,\dots,n-1\})$ norm). This is a nontrivial problem due to the first requirement that $\tilde f$ is an exponential sum, which is not a linear space of functions. 

Interest in this problem can be motivated by the classical Whittaker–Nyquist–Shannon sampling theorem which concerns bandlimited functions sampled on the integers. In many modern inverse problems, however, the unknown object is not just bandlimited, but also described by a small number of spectral parameters. 
\end{enumerate}

\begin{figure}[ht]
	\centering
	\begin{tikzpicture}[scale=1.0, every node/.style={align=center}]
		\node (U) at (0,3.2)
		{\begin{tabular}{c}
				Fourier subspace\\
				estimation
		\end{tabular}};
		
		\node (M) at (-3.8,0)
		{\begin{tabular}{c}
				Structured low-rank\\
				approximation
		\end{tabular}};
		
		\node (S) at (3.8,0)
		{\begin{tabular}{c}
				Sampling / interpolation\\
				by exponential sums
		\end{tabular}};
		
		\node[draw, rounded corners, inner sep=2pt] (E) at (0,1.25)
		{\begin{tabular}{c}
				Spectral estimation
		\end{tabular}};
		
		\draw[<->, thick] (S) -- (M)
		node[midway, below]
		{\scriptsize Hankel/Toeplitz lift};
		
		\draw[<->, thick] (S) -- (U)
		node[midway, right, xshift=3pt]
		{\scriptsize signal subspace};
		
		\draw[<->, thick] (M) -- (U)
		node[midway, left, xshift=-3pt]
		{\scriptsize leading eigenspace};
		
		\draw[->, thick, dashed] (E) -- (U);
		\draw[->, thick, dashed] (E) -- (M);
		\draw[->, thick, dashed] (E) -- (S);
	\end{tikzpicture}
	\caption{Spectral estimation as the core engine for three structured approximation problems}
	\label{fig:relationship}
\end{figure}

\begin{figure}[ht]
    \centering
    \begin{tikzpicture}[scale=1.0, every node/.style={align=center}]
        \node[draw, rounded corners, inner sep=7pt] (N) at (-4,0)
        {\begin{tabular}{c}
            Empirical subspace $\widetilde U$\\
            \(\|\sin(\widetilde U,U)\|_2=\eta\)\\
            \(\|\sin(\widetilde U,U)\|_{\rm F}=\eta\)
        \end{tabular}};

        \node[draw, rounded corners, inner sep=7pt] (F) at (4,0)
        {\begin{tabular}{c}
            Spectral error \\
            \(\|\widetilde{\bfx}-\bfx\|_\infty\lesssim \eta/n\)\\
             \(\|\widetilde{\bfx}-\bfx\|_2\lesssim \eta/n\)
        \end{tabular}};

        \draw[->, thick] (N) -- (F)
            node[midway, above]
            { Gradient-MUSIC};

    \end{tikzpicture}
    \caption{Optimal spectral estimation by Gradient-MUSIC.}
    \label{fig:gmusic-subspace-to-fourier}
\end{figure}

A relationship among the three problems and spectral estimation may be summarized by \cref{fig:relationship}. Here the outer triangle represents the other three structured approximation problems addressed in this paper.  Sampling/interpolation is the source: finite samples of exponential sums encode the continuous parameters.  Structured
low-rank matrix approximation arises by arranging the samples into Toeplitz or Hankel matrices. Fourier subspace estimation arises by extracting the Vandermonde column space.  Spectral
estimation sits at the center because, in the well-separated regime, it converts approximate subspace information into accurate frequency information, which can then be used to solve the other two structured approximation problems.

	\subsection{Notation}
	Generally vectors and matrices are written in bold. We use subscripts to denote an entry of a vector or matrix, e.g., $a_j$ and $A_{j,k}$ denote the $j$-th entry of a vector $\bfa$ and $(j,k)$-th entry of a matrix $ A$. 
	
	For $\bfu\in \C^n$, let $\diag(\bfu)$ be the $n\times n$ diagonal matrix with $\bfu$ as its diagonal entries, presented in the same order as $\bfu$. For subspaces $ U$ and $V$ in $\C^n$ of the same dimension, let $\sin( U,V)$ be the diagonal matrix that contains the sines of their principal angles.  
	
	For a matrix $ A$, let $\| A\|_2$ and $\| A\|_{\rm F}$ denote its spectral and Frobenius norm, respectively. The Moore-Penrose inverse of $A$ is denoted $A^\dagger$ and conjugate transpose of $A$ is denoted $A^*$. We use the shorthand notation $A^{\dagger,*}:=(A^\dagger)^*$. For a vector $\bfu$, let $\|\bfu\|_p$ denote its $p$-norm, for $p\in [1,\infty]$. We write $a\lesssim b$ if there is a universal constant $C>0$ such that $a\leq Cb$. 
	
	We say $z$ is a complex normal random variable with mean zero and variance $\sigma^2$ if $z=x+iy$ where $x,y$ are independent normal random variables that have mean zero and variance $\sigma^2/2$.

\section{Main results and contributions}

The main contribution of this paper is a quantitative transference principle based on spectral estimation by Gradient-MUSIC,  as depicted in Figure \ref{fig:gmusic-subspace-to-fourier},   which is quantitatively related to the other three problems. 
\begin{enumerate}
	\item 
	(Upper bound transference mechanism) We show how the Gradient-MUSIC algorithm and theory can be readily modified to produce methods and performance guarantees for the other three structured approximation problems. Schematically,
	\[
\text{Empirical noise}
\longrightarrow
{\rm subspace \,\,error}
\longrightarrow
{\rm spectral \,\, error}
\longrightarrow
\text{reconstruction error}.
\]
	\item 
	(Lower bound transference mechanism) Perhaps surprisingly, we prove there is a mechanism of converting a minimax lower bound for spectral estimation into an analogous one for the other three problems. This is proved by showing that if any of the three problems could be solved more accurately than possible, then composing such a hypothetical method with Gradient-MUSIC would produce a spectral estimation algorithm whose performance contradicts known minimax lower bounds. Schematically, 
	\[
	\text{too-good structured estimator}
\longrightarrow
\text{too-good Fourier subspace}
\longrightarrow
\text{too-good spectral estimator}.
\]
	
	\item 
	(Optimality) The methods for all three structured approximation problems are provably minimax optimal in the dimension/samples $n$, rank/sparsity $r$, and noise level $\|\bfz\|_2$ or $\|E\|_2$. 
\end{enumerate}

Thus spectral estimation is not only an algorithmic tool but also the quantitative obstruction governing these structured approximation problems. One strength of our approach is a balance between theory and computation. Our methods are efficient, relatively simple to implement, come with strong performance guarantees, and are provably optimal in the well-separated regime. 

For Fourier subspace estimation, a natural metric is $\|\sin(U,V)\|_{\rm F}$, the Frobenius distance between subspaces $U,V$. \cref{thm:Fouriersubspace} shows that, under the well-separated assumption, given a single observation of a Fourier subspace $U\subset\C^n$ of dimension $r$ which has been perturbed by noise $\bfz\in\C^n$ such that $\|\bfz\|_2\leq\alpha$ for small enough absolute $\alpha>0$, our method outputs a Fourier subspace $U_\sharp\subset\C^n$ of dimension $r$ such that for some absolute $C>0$,
	$$
	\norm{\sin( U,U_\sharp)}_{\rm F}
	\leq \frac C {\sqrt n} \, \|\bfz\|_2.
	$$
	We also prove that this approximation rate is minimax optimal in $n$, $r$, and $\|\bfz\|_2$, which means that no other method can scale better with respect to these parameters under the same assumptions. We also prove a bi-Lipschitz bound for distances between $\bfx,\tilde\bfx$ and $U(\bfx),U(\tilde\bfx)$. 

	For the Toeplitz matrix problem, suppose $T\in \C^{n\times n}$ is a rank $r$ Toeplitz matrix that satisfies a regularity assumption which controls $\sigma_1(T)/\sigma_r(T)$, where $\sigma_1$ and $\sigma_r$ are, respectively, the largest and smallest positive singular value. \cref{thm:gdmusictoeplitz} shows that for $\| E\|_2\leq \alpha n$ for sufficiently small absolute $\alpha>0$, our algorithm outputs a Toeplitz matrix $T_\sharp$ of rank exactly $r$ such that for some absolute $C>0$,
	$$
	\norm{ T-T_\sharp}_2 \leq C \| E\|_2.
	$$
	We also prove that this approximation rate is minimax optimal in $n$, $r$, and $\| E\|_2$, which means that no other method can scale better with respect to these parameters under the same assumptions. Perhaps interestingly, the upper bound does not depend on $n$ except implicitly through $\| E\|_2$. 
	
	For the final problem, suppose we are given noisy samples of $f$ on $\{0,1,\dots,n-1\}$ and $f$ has exactly $r$ exponentials whose frequencies are well-separated. \cref{thm:gradmusicexponentialsum} states that if $\bfz\in\C^n$ represents the noise and $\|\bfz\|_2\leq \alpha \sqrt n$ for a sufficiently small absolute $\alpha>0$, then our method produces an exponential sum $f_\sharp$ with exactly $r$ exponentials such that for some absolute $C>0$, 
	\begin{align*}
		\Bigg( \sum_{j=0}^{n-1} \big| f(j) -  f_\sharp(j) \big|^2 \Bigg)^{1/2}
		\leq C \|\bfz\|_2.
	\end{align*}
	This performance guarantee is minimax optimal  in $n$, $r$ and $\|\bfz\|_2$, which means that no other method can scale better with respect to these parameters under the same assumptions.

\subsection{Comparison with prior work}
\label{sec:relatedworks}

It is known that spectral estimation and the other three structured approximation problems are connected. For instance, \cite{cybenko1982moment} used the Toeplitz problem as motivation for Fourier subspace estimation, classical subspace-based methods like MUSIC start by finding a reasonable estimate for a Fourier subspace by a general subspace \cite{krim1996two}, denoising iterative methods such as \cite{cadzow1988signal} first approximates the data by a low-rank Hankel matrix to improve spectral estimation algorithms, and \cite{beylkin2005approximation,plonka2019computation} use spectral estimation algorithms to approximate exponential sums and related functions. 

The results of this paper go beyond prior work and illustrate that there are fundamental {\it quantitative} connections between these three problems. The present paper does not attempt to survey all of these directions. Instead, it isolates a particular regime (well-separated frequencies) in which spectral estimation gives a quantitative transference principle among the three vertices of the triangle. 

Fourier subspace estimation appears to have no direct predecessor. In \cref{sec:discussion1}, we discuss a connection between Gradient-MUSIC and a generic optimization algorithm on the Grassmannian.

Approximation by low-rank Toeplitz and Hankel matrices has a substantial literature, including
alternating projection methods, structured total least norm methods, and approaches based on
sparse inverse Fourier transforms.  These methods are important and often effective, but they do
not always return a matrix of the correct rank, and in several cases sharp perturbation guarantees
are unavailable. A detailed and technical comparison can be found in \cref{sec:discussion2}.

Our results for the approximation of exponential sums from noisy data can be thought of as an analogue of the Whittaker–Nyquist–Shannon sampling theorem. The main difference is that the class of exponential sums with $r$ frequencies is not a linear subspace unlike bandlimited functions and consequently, our reconstruction method is (necessarily) nonlinear unlike the classical sampling theorem. More details can be found in \cref{sec:discussion3}. 

The reader will observe that one could have used another spectral estimation algorithm (e.g., ESPRIT, Prony's, matrix pencil, convex optimization, greedy methods) instead of Gradient-MUSIC for the three other structured approximation problems. In this paper, we used Gradient-MUSIC for two reasons. First is theoretical, as it is provably optimal in the number of samples $n$, sparsity $r$, and noise level $\|\bfz\|_p$ (for deterministic noise $\bfz$ and any $p\in[1,\infty]$) in the well-separated case. Using any other spectral estimation algorithm may result in a worse rate for the other three problems. Additionally, Gradient-MUSIC has an additional property that its performance is tied purely to the subspace error, see the discussion after \cref{thm:maingradientMUSIC}, a property that is required in several proofs. Second is computational, as extensive numerical simulations in \cite{liao2016music} showed that classical MUSIC has excellent performance in the well-separated setting compared to other algorithms, and Gradient-MUSIC is a strictly more efficient algorithm with the same performance guarantees (which are minimax optimal shown in \cite{fannjiang2025optimality}). 

There is also a rich statistical theory of spectral estimation. Both ESPRIT \cite{ding2024esprit} and Gradient-MUSIC are optimal for i.i.d. sub-Gaussian noise. The latter also has performance guarantees for nonstationary Gaussian noise which (we believe) are also optimal. By using the techniques developed in this paper, there is an analogous statistical theory for these three structured approximation problems. We have focused only on the deterministic theory to keep the scope of this paper reasonable.

\subsection{Organization}

\cref{sec:review} contains background information on spectral estimation problem, Fourier matrices and their subspaces, and performance guarantees for Gradient-MUSIC. 

Importantly, the remaining three structured approximation problems are contained in separate sections, which can be read independently of each other. 
\begin{itemize} \itemsep-2pt
	\item 
	\cref{sec:Fouriersubspace} deals with Fourier subspace estimation.
	\item 
	\cref{sec:toeplitz} studies approximation by low-rank Toeplitz matrices.
	\item 
	\cref{sec:exponentialsums} is on approximation of exponential sums from noisy data. 
\end{itemize}
Each of these sections has the same organization: a subsection on problem formulation, main approximation theorems accomplished by Gradient-MUSIC, minimax lower bounds, and a discussion section which also includes comparison to prior work. 

Since the Toeplitz matrix problem is the most well-studied of the three, numerical simulations which compare to alternating projection are in \cref{sec:numerics}. Finally, conclusions can be found in \cref{sec:conclusion}.

\section{Background}
\label{sec:review}

\subsection{Fourier matrices and subspaces}

Through the paper, $n$ will be a fixed integer. Even though various definitions will implicitly depend on $n$, we will suppress their dependence on $n$ to reduce clutter.

The minimum separation of $\bfx$ controls the stability to noise of spectral estimation. Indeed, \cite[Theorem 1]{aubel2019vandermonde} showed that for any $\beta>1$,
\begin{equation}\label{eq:Fouriersigs}
	\Delta(\bfx)\geq \frac {2\pi \beta}n 
	\quad \Longrightarrow \quad 
	\sqrt{n(1-\beta^{-1})} \leq \sigma_{\min}(\Phi_n(\bfx))\leq \sigma_{\max}(\Phi_n(\bfx))\leq \sqrt{n(1+\beta^{-1})}.
\end{equation}
Note that the separation condition $\Delta(\bfx)\geq 2\pi \beta/n$ implicitly implies that $n\geq \beta r$ where $r$ denotes the cardinality of $\bfx$.  By making a slightly strong assumption that $\beta\geq 2$, then 
$$
\Delta(\bfx)\geq \frac {4\pi}n  \quad \Longrightarrow \quad 
\sigma_{\min}(\Phi_n(\bfx))\asymp \sigma_{\max}(\Phi_n(\bfx))\asymp \sqrt n.
$$  

It will be helpful to define the following parameter sets,
\begin{align}
	\Omega(h,r)
	&:=
	\Bigl\{
	\bfx\in \mathbb T^r:
	\Delta(\bfx):=\min_{j\neq k}|x_j-x_k|_{\mathbb T}\ge h
	\Bigr\} \label{eq:separated}, \\
	\calP(h,r)
	&:=\Big\{(\bfx,\bfa)\colon \bfx\in \Omega(h,r), \, 1\leq |a_j|\leq 10 \text{ for $j=1,\dots,r$}\Big\}. \label{eq:parameterset}
\end{align}
Note that if $h=2\pi \beta/n$ for an integer $\beta$, then these sets are nonempty if and only if $n\geq \beta r$. In the second definition, the `1' and `10' have no special meaning, as they simply ensure that the dynamic range of the amplitudes cannot be too large. They can be replaced with arbitrary positive $r_0\leq r_1$, though some of the absolute constants that appear throughout the paper will depend on $r_0$ and/or $r_1$.

The Fourier subspace associated with $\Phi_n(\bfx)$ is defined as 
$$
U(\bfx):=\range(\Phi_n(\bfx))\subset\C^n. 
$$
Whenever $n\geq 2r$, there is a bijection between all sets $\bfx\subset\T$ of cardinality $r$ with Fourier subspaces of dimension $r$ in $\C^n$, as shown in \cref{lem:bijection}. 

When a  subspace $W$ is canonically identified with its orthogonal projector $P_W$, the Grassmannian $\Gr(r,\mathbb C^n)$ of $r$-dimensional subspaces of $\mathbb C^n$, 
can be identified with the space of orthogonal projectors 
\[
\Gr(r,\mathbb C^n)\cong\big\{P\in \C^{n\times n}: P^2=P,\,\, \hbox{$P$ is Hermitian}\big\}.
\]
In the case of a Fourier subspace $U$, $P_U$ can be canonically represented by the  Hermitian matrix
\[
P(\bfx):=\Phi_n(\bfx)\bigl(\Phi_n(\bfx)^*\Phi_n(\bfx)\bigr)^{-1}\Phi_n(\bfx)^*.
\]
The Grassmannian $\Gr(r,\C^n)$ is embedded into the space of Hermitian matrices
endowed  with  the real Hilbert--Schmidt inner product on
Hermitian matrices
\[
\langle A,B\rangle_{\mathbb R}
:=
\operatorname{Re}\operatorname{tr}(A^*B)=\operatorname{tr}(AB).
\]
The resulting distance, called the {\it Frobenius distance},  
between $U,V\in \Gr(r,\C^n)$ is given by
\begin{equation}
	\label{eq:Chordal}
	\norm{\sin( U, V)}_{\rm F}
	=\frac{1}{\sqrt{2}}\|P_U-P_V\|_{\rm F}
	=\Big(\sum_{j=1}^r\sin^2 (\theta_j)\Big)^{1/2},
\end{equation}
where  $\{\theta_j\}_{j=1}^r$  are  the principal angles between the subspaces $U$ and $V$. An alternative, globally defined metric is the {\it spectral distance}, 
\begin{equation}
	\label{eq:sinetheta}	
	\norm{\sin( U, V)}_2=\|P_U-P_V\|_2
	=\max_{j=1,\dots, r}\sin (\theta_j).
\end{equation}
These relationships can be found in the monograph \cite{stewart1990matrix}. 

Given any vector $\bfu\in \C^n$, set $m:=\lceil n/2\rceil$ and define the $m \times m$ Hankel matrix of $\bfu$ as
\begin{equation} \label{eq:hankel}
	H(\bfu):=
	\begin{bmatrix}
		u_0 &u_1 &\cdots &u_{m-1} \\
		u_1 &u_2 &\cdots &u_{m} \\
		\vdots &\vdots & &\vdots \\
		u_{m-1} &u_{m} &\cdots &u_{2m-2}
	\end{bmatrix}.
\end{equation}
Given $\bfy=\Phi_n(\bfx)\bfa\in\C^n$, a direct computation establishes the Fourier matrix factorization 
\begin{equation}\label{eq:hankelfactorization}
	H(\bfy)=\Phi_m(\bfx)\diag(\bfa)\Phi_m(\bfx)^T.
\end{equation}

	\begin{remark}[Thresholding] \label{rem:threshold}
		The rank of $H(\bfy)$ defined in \eqref{eq:hankelfactorization} can be deduced from noisy data due to a general singular value thresholding in result \cite[5.6]{fannjiang2025optimality}. Suppose $\Delta(\bfx)\geq 2\pi \beta/n$ for $\beta > 1$ and $r_0\leq |a_j|\leq r_1$ for all $j$. Given $\tilde\bfy=\bfy+\bfz$ for any $\bfz$ such that $\|\bfz\|_2\leq \alpha \sqrt n$ for a small enough $\alpha>0$, consider the perturbed Hankel matrix $H(\tilde \bfy)$. The rank $r$ is precisely the largest $k$ such that 
		$$
		\frac{\sigma_k(H(\tilde\bfy))}{\sigma_1(H(\tilde\bfy))}
		\geq \frac 7 {10} \frac{r_0}{r_1} \frac{\beta-1}{\beta+1}. 
		$$	
	\end{remark}	

	\subsection{Review of Gradient-MUSIC}

	The goal of spectral estimation is to estimate the unknown frequencies $\bfx$ of $f$ given samples $\tilde\bfy$. To quantify the accuracy, if $\tilde \bfx\subset \T$ is any other set consisting of $r$ elements as well, then the $\ell^p$ error for $p\in [1,\infty]$ is defined as
	$$
	\|\tilde \bfx - \bfx \|_p
	:= \min_{\text{permutation }\sigma } \ \ \left( \sum_{k=1}^r | x_k - \tilde x_{\sigma(k)}|_\T^p \right)^{1/p}.
	$$
	This is the usual matching distance between $\bfx$ and $\tilde \bfx$ when the $\ell^p$ metric on $\T$ is used. Throughout, we implicitly assume that $\bfx$ and $\tilde \bfx$ are indexed so that the identity permutation minimizes the matching distance. 
	
	We developed and analyzed Gradient-MUSIC in our prior work \cite{fannjiang2025optimality}. We created several versions of this method that can be used for different tasks. In this paper, Gradient-MUSIC refers to \cite[Algorithm 2]{fannjiang2025optimality}, the one that takes a subspace $ W$ (not necessarily Fourier) which encodes some information about the function samples and outputs an approximation of $\bfx$. It can be viewed as a nonconvex optimization algorithm with performance guarantees. Let us briefly explain how the Gradient-MUSIC algorithm works. Define the ``steering vector" $\bfphi\colon \T\to \C^n$ by 
		$$
		\bfphi(t)
		:= \frac 1 {\sqrt n} \Big[e^{ijt}\Big]_{j=0,\dots,n-1}. 
		$$
		Note that $\bfphi$ has been normalized to have unit length. Consider the two {\it MUSIC functions},
		\begin{equation}
			\label{eq:musicfunctions}
			q(t)= 1 - \bfphi(t)^* UU^* \bfphi(t), \andspace
			\tilde q(t)= 1 - \bfphi(t)^* WW^* \bfphi(t). 
		\end{equation}
		The first function is the noiseless version corresponding to the true Fourier subspace $U$, while the second is defined with the approximate subspace $W$ which acts as a perturbation of $U$. Suppose $\norm{\sin( U, W)}_2\leq 0.01$. By \cite[Theorem 5.9]{fannjiang2025optimality}, it holds that: 
		\begin{enumerate}[(a)]
			\item 
			The MUSIC function $\tilde q$ has at least $r$ local minima, and if the smallest $r$ local minima are $\tilde\bfx=\{\tilde x_k\}_{k=1}^r$, then they satisfy the error bound
			\begin{equation}
				\label{eq:minimaperturbation}
				\|\tilde\bfx-\bfx\|_\infty \leq \frac{7} n \norm{\sin( U, W)}_2. 
			\end{equation}
			\item 
			For $k=1,\dots,r$, the local minimum $\tilde x_k$ is the only critical point of $\tilde q$ in the interval
			$$
			\left[\tilde x_k-\frac{4\pi}{3n}, \, \tilde x_k+ \frac{4\pi}{3n}\right]
			$$
			and gradient descent initialized in this interval with step size $c/n^2$ for an explicit absolute $c>0$ converges at a linear rate to $\tilde x_k$. 
			\item 
			Initialization can be found by thresholding $\tilde q$ on any subset of $\T$ with mesh norm at most $1/(2n)$. For example, a uniform grid on $\T$ with width at most $1/(2n)$ suffices.
		\end{enumerate}
		
		\begin{figure}
			\centering
			\includegraphics[width=0.9\textwidth]{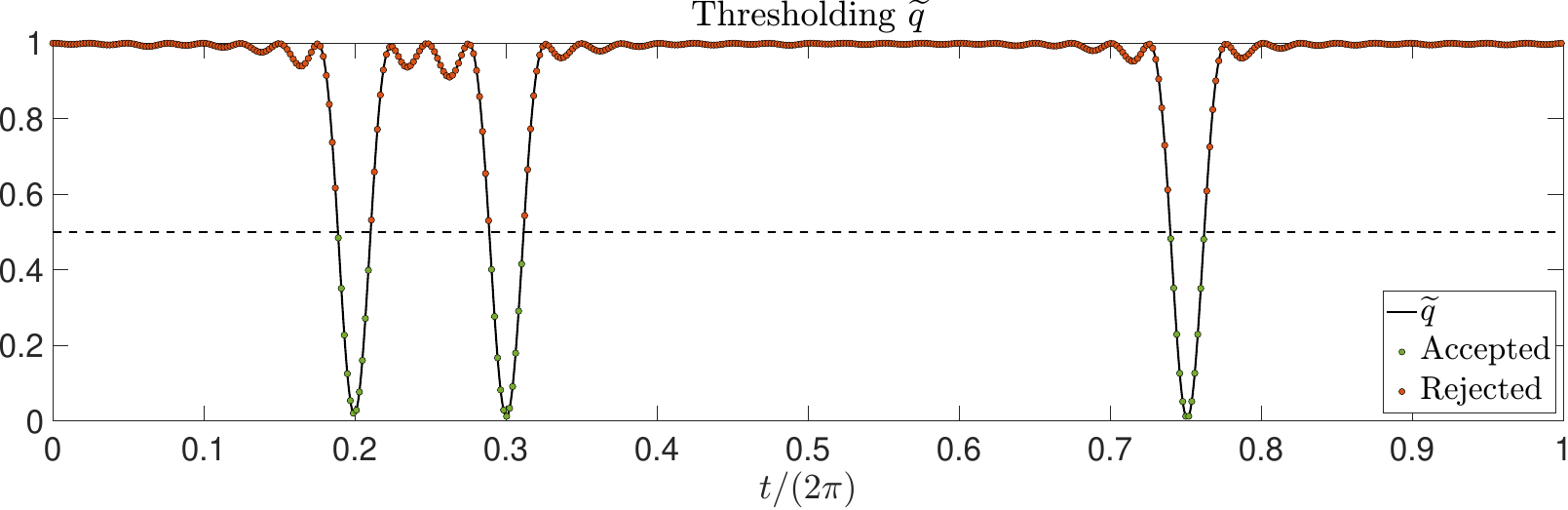}
			\caption{An example of a MUSIC function $\tilde q$ where $n=40$ and $\bfx=\{2\pi(0.2), 2\pi(0.3), 2\pi(0.75)\}$. Thresholding $\tilde q$ on a grid of width $1/(2n)$ and keeping grid points $t$ where $|q(t)|\leq 0.529$ provably finds points in the basin of attraction for each local minimum $\tilde x_k$ of $\tilde q$.}
			\label{fig:landscapeexample}
		\end{figure}
		An example of a $\tilde q$ is shown in \cref{fig:landscapeexample}. The two main steps in Gradient-MUSIC are finding initialization by thresholding and running gradient descent. 
		
		Gradient-MUSIC is a modification of the classical MUSIC algorithm \cite{schmidt1986multiple}. Under the assumptions of \cref{thm:maingradientMUSIC}, they have the same approximation guarantees (both aim to find the smallest $r$ local minima $\tilde\bfx$ of $\tilde q$), but the former is significantly faster. This is made possible by the special geometric properties of the MUSIC function which allow Gradient-MUSIC to efficiently find initialization (regardless of the noise level) for fast local optimization methods such as gradient descent. In contrast, classical MUSIC requires a brute force search over a fine grid, which makes it inefficient for settings where it is possible to estimate the frequencies with great accuracy.

	The following is a performance guarantee summarizing the properties of Gradient-MUSIC.
	
	\begin{theorem}[Theorems 5.9 and 5.11, \cite{fannjiang2025optimality}] \label{thm:maingradientMUSIC}
		Let $n\geq 100$. Let $\bfx\subset \T$ such that $\Delta(\bfx)\geq 8\pi/n$ and $U=\range(\Phi_n(\bfx))\subset\C^n$ be its associated Fourier subspace. For any subspace $ W$ of the same dimension as $ U$ such that $\|\sin( U, W)\|_2\leq 1/100$, the MUSIC function $\tilde q:=1-\bfphi(t)^*WW^*\bfphi(t)$ has at least $r$ local minima such that if $\tilde\bfx$ denotes the $r$ smallest ones, then  
			\begin{equation*}
				\|\tilde\bfx-\bfx\|_\infty
				\leq \frac {7} n \norm{\sin( U, W)}_2.
			\end{equation*}
			Furthermore, Gradient-MUSIC takes $ W$ and outputs $\bfx^\sharp$ of the same cardinality as $\bfx$ such that the frequency error satisfies
		\begin{align*}
			\|\bfx^\sharp-\bfx\|_\infty
			&\leq \frac {10} n \, \norm{\sin( U, W)}_2.
		\end{align*}
	\end{theorem}
	
	One strength of this theorem is its generality. We emphasize that $W$ is any arbitrary subspace, which is not necessarily of Fourier type. The only assumption is $\|\sin(U, W)\|_2\leq 1/100$, so the theorem does not take into account how $ W$ is computed. 
	
	For the first step, it is sufficient to evaluate $\tilde q$ on any (potentially nonuniform) grid with mesh norm at most $1/(2n)$, which has worst case complexity $C n^2r$. Letting $\epsilon$ be any quantity such that $\norm{\sin( U, W)}_2\leq \epsilon$, one needs at most $C\log(1/\epsilon)$ gradient iterations. Then the upper bound in \cref{thm:maingradientMUSIC} becomes 
	$$
	\|\bfx-\bfx^\sharp\|_\infty
	\leq \frac{10 \epsilon} n,
	$$
	and the worst case computational complexity of Gradient-MUSIC is at most
	\begin{equation}\label{eq:computationalcost}
		C\left(n^2 r + n r^2 \log\left( 1/\epsilon \right)\right).
	\end{equation}
	This algorithm is compatible with parallelization. With $p$ parallel cores such that $r\leq p\leq n$, the worst case time complexity is at most 
	\begin{equation} \label{eq:computationalcostparallel}
		C\left( n^2 r p^{-1} + n r\log\left( 1/\epsilon \right) \right).
	\end{equation}
	
	\begin{remark}[Reduced complexity using FFT]
		If the grid is chosen to be uniform, then evaluation of $\tilde q$ on the grid can be reduced to $O(n\log(n)r)$. However, the methods that we will use for all three structured approximation algorithms will also require a truncated SVD of a Hankel matrix which has complexity $O(n^2 r)$, though this performance can be improved with randomized SVD method, with the trade-off that there is a chance of failure.
	\end{remark}
	
	\subsection{A modified result for Gradient-MUSIC}
	
	\cref{thm:maingradientMUSIC} controls the $\ell^\infty$ error between $\bfx^\sharp$ and $\bfx$ via the spectral distance of $U$ and $W$. Since the latter is the largest principal angle between $U$ and $W$, this is a $\ell^\infty$ to $\ell^\infty$ estimate. 
	
	For several results, it will be necessary control the error between $\bfx^\sharp$ and $\bfx$ in $\ell^2$. Directly applying \cref{thm:maingradientMUSIC} comes at a price of $\sqrt r$, which we would like to avoid. Instead, we will control the $\ell^2$ error in terms of the Frobenius distance between $U$ and $W$, which is a $\ell^2$ sum of their principal angles. Consequently, the following theorem is a $\ell^2$ to $\ell^2$ bound and is proved in \cref{proof:maingradientMUSIC2}. 
	
	\begin{theorem} \label{thm:maingradientMUSIC2}
		Let $n\geq 100$. Let $\bfx\subset \T$ such that $\Delta(\bfx)\geq 8\pi/n$ and $ U=\range(\Phi_n(\bfx))\subset\C^n$ be its associated Fourier subspace. For any subspace $ W$ of the same dimension as $ U$ such that $\|\sin( U, W)\|_{\rm F}\leq 1/100$, the MUSIC function $\tilde q:=1-\bfphi(t)^*WW^*\bfphi(t)$ has at least $r$ local minima such that if $\tilde\bfx$ denotes the $r$ smallest ones, then  
		\begin{equation}
			\label{eq:goal0}
			\|\tilde\bfx-\bfx\|_2 
			\leq \frac {20} n \norm{\sin( U, W)}_{\rm F}.
		\end{equation}
		Furthermore, Gradient-MUSIC takes $ W$ and outputs $\bfx^\sharp$ of the same cardinality as $\bfx$ such that the frequency error satisfies
		\begin{align*}
			\|\bfx^\sharp-\bfx\|_2
			&\leq \frac{25} n \, \norm{\sin( U, W)}_{\rm F}.
		\end{align*}
	\end{theorem}
	
	\cref{thm:maingradientMUSIC} and \cref{thm:maingradientMUSIC2} obviously do not imply each other. By interpolating between the two theorems, we get $\ell^p$ bounds for the error between $\bfx$ and $\bfx^\sharp$. 
	
	The computational complexity of Gradient-MUSIC for finding $\bfx^\sharp$ satisfying \cref{thm:maingradientMUSIC2} is  \eqref{eq:computationalcost} and \eqref{eq:computationalcostparallel}, except $\epsilon$ is any quantity such that $\norm{\sin( U, W)}_{\rm F}\leq \epsilon$.

	\subsection{A minimax lower bound for spectral estimation}
	
	Fix $\epsilon>0$ and $p\in [1,\infty]$. Let $\psi$ be any function that maps any given data (any possible $\tilde \bfy$ such that $\tilde \bfy =\Phi_n(\bfx)\bfa + \bfz\in \C^n$ for $(\bfx,\bfa)\in \calP(16\pi/n,r)$ and $\|\bfz\|_p\leq \epsilon$) to a set $\bfx^\psi\subset\T$ of cardinality $r$. Here, $\psi$ does not have to be computationally efficient or even computable. The minimax error is 
	\begin{equation}
		\label{eq:minimaxspectral}
		X_*(n,r,p,\epsilon)
		=\inf_\psi \sup_{(\bfx,\bfa)\in \calP(16\pi/n,r)} \,  \sup_{\|\bfz\|_p \leq \epsilon } \, \|\bfx-\bfx^\psi\|_\infty,
	\end{equation}
	where the inf is taken over all possible such functions $\psi$. This is a lower bound for all possible methods, including Gradient-MUSIC.
	\begin{lemma}[\cite{fannjiang2025optimality} Lemma 4.3] \label{lem:minimax}
		For any $n,r\geq 1$, $p\in [1,\infty]$, and  $\epsilon\leq 8\pi n^{1/p}$, we have
		$$
		X_*(n,r,p,\epsilon)
		\geq \frac{\epsilon}{4n^{1+1/p}}.
		$$
	\end{lemma}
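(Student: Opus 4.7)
The plan is a standard two-point indistinguishability (Le Cam) argument: construct two parameter configurations in $\calP(16\pi/n,r)$ whose clean signals differ by at most $\epsilon$ in $\ell^p$-norm, so that a single observation $\bfy$ is simultaneously consistent with both configurations under a noise of size at most $\epsilon$. Any estimator, computable or not, must then err on at least one of them by half the $\ell^\infty$ distance between the two frequency sets.

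First I would fix a base set $\bfx\subset\T$ of cardinality $r$ with $\Delta(\bfx)\geq 24\pi/n$ (the regime in which $\calP(16\pi/n,r)$ is of interest guarantees such a configuration exists) and unit amplitudes $\bfa=(1,\ldots,1)$. I would then form $\bfx'$ by perturbing a single coordinate $x_k\mapsto x_k+\delta$ and leaving the other coordinates and the amplitudes untouched. As long as $|\delta|\leq 8\pi/n$, the minimum separation drops by at most $8\pi/n$, so $\Delta(\bfx')\geq 16\pi/n$ and $(\bfx',\bfa)\in\calP(16\pi/n,r)$.

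Next I would bound the $\ell^p$-gap between the two clean signals. Entrywise,
$$\bigl[\bfPhi(n,\bfx)\bfa-\bfPhi(n,\bfx')\bfa\bigr]_j = e^{ijx_k}\bigl(1-e^{ij\delta}\bigr),$$
and since $|1-e^{i\theta}|\leq|\theta|$ this yields
$$\norm{\bfPhi(n,\bfx)\bfa - \bfPhi(n,\bfx')\bfa}_p \leq |\delta|\left(\sum_{j=0}^{n-1}j^p\right)^{1/p} \leq |\delta|\,n^{1+1/p},$$
with the sum read as $\max_j j\leq n$ when $p=\infty$. Setting $\bfy:=\bfPhi(n,\bfx)\bfa$, the pair $(\bfx,\bfa,\bfzero)$ explains $\bfy$ with zero noise, and the pair $(\bfx',\bfa,\bfy-\bfPhi(n,\bfx')\bfa)$ explains it with noise of $\ell^p$-norm at most $|\delta|\,n^{1+1/p}$. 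Requiring this to be at most $\epsilon$ is the constraint $|\delta|\leq \epsilon/n^{1+1/p}$.

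Finally, any function $\psi$ of the data produces a single output $\bfx_\psi$; by the triangle inequality for the indexed $\ell^\infty$ set distance, $\max\{\|\bfx-\bfx_\psi\|_\infty,\|\bfx'-\bfx_\psi\|_\infty\}\geq\tfrac12\|\bfx-\bfx'\|_\infty=|\delta|/2$. Choosing $|\delta|$ as large as both constraints permit, i.e.\ $|\delta|=\min\bigl(\epsilon/n^{1+1/p},\,8\pi/n\bigr)$, and noting that the assumption $\epsilon\leq 8\pi n^{1/p}$ forces the first term to be the minimum, gives the claimed bound (with room to spare; the factor $1/8$ simply absorbs bookkeeping constants such as a sharper handling of $\sum j^p$ for small $p$). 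There is no real obstacle here: the only care required is verifying the separation constraint after perturbation and keeping track of indexing in the $\ell^\infty$ set metric; both are routine.
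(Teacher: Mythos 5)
The paper does not prove this lemma here; it imports it verbatim from \cite{fannjiang2025optimality} (Lemma 4.3), so there is no in-text proof to compare against. Your two-point (Le Cam) construction is the standard route to such minimax lower bounds for frequency estimation, and the argument as written is sound: the entrywise bound $|1-e^{ij\delta}|\leq j|\delta|$ gives $\|\bfPhi(n,\bfx)\bfa-\bfPhi(n,\bfx')\bfa\|_p\leq |\delta|\,n^{1+1/p}$, the hypothesis $\epsilon\leq 8\pi n^{1/p}$ makes $\epsilon/n^{1+1/p}$ the binding constraint on $|\delta|$, and the matching-distance triangle inequality yields a lower bound of $\epsilon/(2n^{1+1/p})$, which is stronger than the stated $\epsilon/(8n^{1+1/p})$. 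The only point worth tightening is your existence claim for the base configuration: a set of cardinality $r$ with $\Delta(\bfx)\geq 24\pi/n$ exists only when $r\leq n/12$, whereas $\calP(16\pi/n,r)$ is nonempty up to $r\leq n/8$; in that narrow range you should instead perturb an extremal point of the configuration outward (or take $\Delta(\bfx)\geq 16\pi/n+|\delta|$ with $|\delta|\leq\epsilon/n^{1+1/p}$ small), which costs nothing. With that bookkeeping fixed, the proposal establishes the lemma.
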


	\section{Fourier subspace estimation}
	\label{sec:Fouriersubspace}
	
	\subsection{Problem formulation} 
	
	We begin by setting up the Fourier subspace estimation problem in a careful manner. Suppose we observe a vector of the form $\bfy+\bfz \in \C^{n}$ where the noise $\bfz$ is small enough and for some $(\bfx,\bfa)$, 
	\begin{equation}
		\label{eq:Fouriersubspace}
		\bfy = \Phi_n(\bfx)\bfa.
	\end{equation}
	Thus, the $\bfy$ lies in a Fourier subspace $U(\bfx):=\range(\Phi_n(\bfx))$. Can we find a Fourier subspace $\tilde U\subset\C^n$ of the same dimension that approximates the underlying $U$ with high precision? By approximation, we use either 
	$$
	\norm{\sin(U,\tilde U)}_2 \quad \text{or} \quad \norm{\sin(U,\tilde U)}_{\rm F}.
	$$
	
	At first glance this problem may seem impossible. Even if there is no noise and we get access to $\bfy=\Phi_n(\bfx)\bfa$, general linear algebra tells us that we cannot deduce a $r$ dimensional subspace unless we observe at least $r$ linearly independent vectors in it. However, we argue that assuming $n\geq 2r$ and all entries of $\bfa$ are nonzero imply that the noiseless problem is solvable. 
	
	\begin{lemma}\label{lem:bijection}
		If $n\geq 2r$, $\bfx\subset\T$ has cardinality $r$, all entries of $\bfa\in \C^r$ are nonzero, and $\bfy=\Phi_n(\bfx)\bfa$, then there is no other $\bfx'$ with cardinality $r$ such that $\bfy=\Phi_n(\bfx')\bfa'$ for some $\bfa'$. 
	\end{lemma}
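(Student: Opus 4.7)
The plan is to prove this by contradiction, exploiting the fact that a Fourier (Vandermonde-type) matrix whose distinct frequencies number at most $n$ has full column rank.

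Suppose there exists $\bfx'\subset\T$ of cardinality $r$ with $\bfx'\neq\bfx$ and some $\bfa'\in\C^r$ such that $\bfPhi(n,\bfx)\bfa=\bfPhi(n,\bfx')\bfa'$. First I would combine these two representations into a single ``difference'' representation. Let $\boldsymbol{\xi}=\{\xi_1,\dots,\xi_s\}=\bfx\cup\bfx'$ be the union of the two support sets, so $s\leq 2r\leq n$. Define $\bfb\in\C^s$ by
\[
b_\ell = \begin{cases} a_k & \text{if } \xi_\ell = x_k\in\bfx\setminus\bfx', \\ -a'_k & \text{if } \xi_\ell = x'_k\in\bfx'\setminus\bfx, \\ a_k - a'_k & \text{if } \xi_\ell = x_k = x'_k\in\bfx\cap\bfx'. \end{cases}
\]
Then subtracting the two expressions for $\bfy$ yields $\bfPhi(n,\boldsymbol{\xi})\bfb=\bfzero$ in $\C^n$.

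The next step is the linear-algebraic core: $\bfPhi(n,\boldsymbol{\xi})$ is an $n\times s$ matrix with $s\leq n$ whose columns are $\{(e^{ij\xi_\ell})_{j=0}^{n-1}\}_{\ell=1}^s$. Because the $\xi_\ell$ are distinct, any $s\times s$ submatrix formed from $s$ distinct rows (e.g., the top $s$ rows) is a standard Vandermonde matrix in the distinct nodes $e^{i\xi_\ell}$ and is therefore nonsingular. Hence $\bfPhi(n,\boldsymbol{\xi})$ has full column rank, so $\bfb=\bfzero$.

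Finally, I derive the contradiction from the hypothesis $\bfx\neq\bfx'$ together with $|\bfx|=|\bfx'|=r$. These force $\bfx\setminus\bfx'\neq\emptyset$: choose any $x_k\in\bfx\setminus\bfx'$ and let $\ell$ be the corresponding index in $\boldsymbol{\xi}$. Then $b_\ell = a_k$, which is nonzero by hypothesis, contradicting $\bfb=\bfzero$. Therefore no such $\bfx'$ can exist. The only even mildly delicate step is bookkeeping the three cases when forming $\bfb$ and verifying that the bound $s\leq 2r\leq n$ is exactly what lets us invoke full column rank; no assumption on $\bfa'$ is needed beyond the fact that the \emph{true} amplitudes $\bfa$ have no zero entries.
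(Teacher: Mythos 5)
Your proof is correct and follows essentially the same route as the paper: subtract the two representations, merge repeated frequencies into a single coefficient vector $\bfb$ supported on $\bfx\cup\bfx'$, invoke full column rank of the $n\times s$ Fourier matrix with $s\leq 2r\leq n$ via the Vandermonde determinant, and observe that $\bfx\setminus\bfx'\neq\emptyset$ forces some $b_\ell=a_k\neq 0$. Your version merely spells out the case analysis for $\bfb$ and the nonvanishing argument more explicitly than the paper does.
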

	
	\begin{proof}
		Indeed, suppose for the purpose of deriving a contradiction that $\bfy=\Phi_n(\bfx')\bfa'$ for some $(\bfx',\bfa')$ such that $\bfx'\not=\bfx$ and $\bfx'$ has cardinality $r$. Then we end up with the system of equations 
		$$
		[\Phi_n(\bfx),\Phi_n(\bfx')][\bfa,-\bfa']^T=\bfzero.
		$$
		Deleting any columns of $[\Phi_n(\bfx),\Phi_n(\bfx')]$ that are repeated and redefining $[\bfa,-\bfa']$ accordingly, we obtain the system
		$$
		\Phi_n(\bfx\cup\bfx')\bfb = \bfzero.
		$$
		Since all entries of $\bfa$ are nonzero and $\bfx\not=\bfx'$, we see that $\bfb$ cannot be identically zero. This shows that the Fourier matrix $\Phi_n(\bfx\cup \bfx')$ with $n$ rows and at most $2r$ columns has a nontrivial null space. This contradicts the Vandermonde determinant theorem since $n\geq 2r$.
	\end{proof}
	
	Throughout, we will implicitly assume that $n\geq 2r$ and that all entries of $\bfa$ are nonzero. Even though this lemma suggests that Fourier subspace estimation may be possible for small noise, it is not clear how to do this efficiently. The main obstacle is that $\tilde U$ is also required to be a Fourier subspace. Fourier subspace estimation is in disguise, a computational differential geometry problem. Each Fourier subspace is a point in the Grassmannian manifold consisting of all subspaces of dimension $r$ in $\C^n$. The set of all Fourier subspaces is a sub-manifold of dimension $r$. Given a $\bfy+\bfz\in\C^n$ that is close enough to a Fourier subspace, the problem is to ``project" it to the sub-manifold of Fourier subspaces. 
	
	By a scaling argument, without loss of generality, we assume that $|a_j|\geq 1$ for all $j$. In order to study Fourier subspace estimation properly, we will restrict our attention to
	$$
	(\bfx,\bfa)\in \calP\left(16\pi/n, r\right).
	$$
	While it would be interesting to relax the minimum separation requirement in this definition, $\Phi_n(\bfx)$ may be extremely ill-conditioned due to various results about Fourier matrices, see \cite{li2021stable,batenkov2020conditioning,kunis2020smallest,batenkov2021spectral,barnett2022exponentially,li2025multiscale}. This would greatly change the nature of this problem. 
	
	Finally, under these conditions, inequality \eqref{eq:Fouriersigs} implies that $\sigma_r(\Phi_n(\bfx))\gtrsim \sqrt n$ and so $\|\bfy\|_2\gtrsim \sqrt n$. Consequently, it is impossible to estimate $ U$ unless 
	$$
	\|\bfz\|_2\leq \alpha \sqrt{n}
	$$ 
	for sufficiently small $\alpha>0$.

	\subsection{Gradient-MUSIC for Fourier subspace estimation} 
	
	Before we discuss how to perform Fourier subspace estimation using Gradient-MUSIC, we start with a general local stability result that is detached from computation. It tells us how much $U(\bfx)$ changes when a well-separated $\bfx$ gets perturbed by a small amount. 
		\begin{proposition} \label{prop:stability1}
			For $\beta \geq 2$ and $n\geq \beta r$, suppose $\bfx,\tilde\bfx\subset \T$ are both sets of cardinality $r$ such that $\Delta(\bfx)\geq 2 \pi \beta/n$ and $\|\bfx-\tilde\bfx\|_\infty \leq \pi \beta/(4n)$. Then there are absolute constants $C_1,C_2>0$ such that
			\begin{align*}
				\big\|\sin(U(\bfx), U(\tilde\bfx)) \big\|_2
				&\leq C_1 n \,  \|\bfx-\tilde\bfx\|_\infty, \\
				\big\|\sin(U(\bfx), U(\tilde\bfx)) \big\|_{\rm F}
				&\leq C_2 n \,  \|\bfx-\tilde\bfx\|_2. 
			\end{align*}
		\end{proposition}
	
	This proposition is proved in \cref{proof:stability1}.
	Under the assumptions that $(\bfx,\bfa)\in \calP(16\pi/n,r)$ and $\|\bfz\|_2\leq \alpha \sqrt{n}$ for a small enough absolute $\alpha>0$, the result in \cite[Lemma 5.6]{fannjiang2025optimality} also holds for Hankel matrices. This implies 
	$$
	\sigma_r(H(\bfy+\bfz))>\sigma_{r+1}(H(\bfy+\bfz)),
	$$
	and that $r$ can be exactly determined by a simple singular value thresholding method described in \cref{rem:threshold}.
	
	Let $W\subset\C^m$ be the leading $r$-dimensional left singular subspace of the Hankel matrix $H(\bfy+\bfz)$. Note that $W$ is not a Fourier subspace, so the truncated SVD does not succeed for Fourier subspace estimation. Feeding $W$ into Gradient-MUSIC, we obtain a set $\bfx^\sharp\subset\T$ of cardinality $r$. Then Gradient-MUSIC produces the Fourier subspace
	$$
	U_\sharp := \range(\Phi_n(\bfx^\sharp))\subset \C^{n}.
	$$
	The following theorem quantifies the quality of approximation. 
	
	\begin{theorem} \label{thm:Fouriersubspace}
		There are absolute constants $C,\alpha>0$ such that the following hold. For all $r\geq 1$ and $n\geq \max\{200,16r\}$, given $\tilde\bfy=\bfy+\bfz$, where $\bfy=\Phi_n(\bfx)\bfa$, $(\bfx,\bfa)\in \calP(16\pi/n,r)$ and $\|\bfz\|_2\leq \alpha \sqrt{n}$, Gradient-MUSIC produces a Fourier subspace $U_\sharp$ that satisfies
		\begin{align*}
			\norm{\sin( U, U_\sharp)}_2
			\leq 
			\norm{\sin( U,U_\sharp)}_{\rm F}
			\leq \frac{C \|\bfz\|_2}{\sqrt n}. 
		\end{align*}
	\end{theorem} 
	
	\begin{proof}
		Recall the definition of $m:=\lceil n/2\rceil$. Let $ U,W\subset\C^m$ denote the leading $r$-dimensional left singular space of $ H(\bfy)$ and $ H(\bfy+\bfz)$ respectively. Note that $ U=\range(\Phi_m(\bfx))\subset\C^m$ due to factorization \eqref{eq:hankelfactorization}, so $ U$ is a Fourier subspace. We see that 
		$$
		\norm{H(\bfz)}_2
		\leq \norm{H(\bfz)}_{\rm F}
		\leq \sqrt m\, \|\bfz\|_2
		\leq \alpha n.
		$$
		Since $(\bfx,\bfa)\in \calP(16\pi/n,r)$, using \eqref{eq:hankelfactorization} and \eqref{eq:Fouriersigs}, we get that
		$$
		\sigma_r(H(\bfy))
		\geq \min_j |a_j| \, \sigma_r^2\round{\Phi_m(\bfx)}
		\gtrsim n.
		$$
		We apply Wedin's sine-theta theorem for Frobenius norm \cite[Chapter V, Theorem 4.1]{stewart1990matrix} to the matrices $ H(\bfy)$ and $ H(\bfy+\bfz)$, and by making $\alpha>0$ sufficiently small, we see that
		\begin{equation}
			\label{eq:sintheta1}
			\norm{\sin( U, W)}_{\rm F}
			\leq \frac{\norm{H(\bfz)}_{\rm F}}{\sigma_r(H(\bfy))-\norm{H(\bfz)}_2}
			\lesssim \frac{\norm{H(\bfz)}_{\rm F}}{n}
			\leq \frac{\|\bfz\|_2}{\sqrt n}
			\leq \alpha.
		\end{equation}
		
		We proceed to verify that the assumptions of \cref{thm:maingradientMUSIC} hold. Observe that $\Delta(\bfx)\geq 16\pi/n \geq 8\pi/m$ and $U$ is a Fourier subspace. Also note that 
		$n\geq 200$ implies $m\geq 100$. Inequality \eqref{eq:sintheta1} tells us that by making $\alpha>0$ a sufficiently small absolute constant, we can assume $\|\sin( U, W)\|_{\rm F}\leq 1/100$. This now verifies that the assumptions of \cref{thm:maingradientMUSIC2} hold, so the Gradient-MUSIC algorithm takes $W$ and outputs a $\bfx^\sharp\subset\T$ of cardinality $r$ such that
		\begin{equation}
			\label{eq:xbound}
			\|\bfx^\sharp-\bfx\|_2
			\lesssim \frac1 m \norm{\sin( U, W)}_{\rm F}
			\lesssim \frac{\|\bfz\|_2}{n^{3/2}}
			\lesssim \frac{\alpha}{n}.
		\end{equation}
		
		Now let $U_\sharp=\range(\Phi_n(\bfx^\sharp))\subset\C^n$ be the Gradient-MUSIC Fourier subspace estimator. By making $\alpha>0$ smaller if necessary, we can make the right hand side of \eqref{eq:xbound} no larger than $4\pi/n$. Since $\Delta(\bfx)\geq 16\pi/n$, we have $\Delta(\bfx^\sharp)\geq 8\pi/n$. Using inequality \eqref{eq:Fouriersigs} again, we see that 
		\begin{equation}
			\label{eq:errorhelp}
			\sigma_r(\Phi_\sharp)
			\gtrsim \sqrt n. 
		\end{equation}
				
		Now we are ready to control the error between $ U$ and $U_\sharp$ by using the Wedin's sine-theta theorem for the Frobenius norm \cite[Chapter V, Theorem 4.1]{stewart1990matrix}. For convenience, let $ E:=\Phi_n(\bfx)-\Phi_n(\bfx^\sharp)$. By \eqref{eq:xbound}, \eqref{eq:errorhelp}, and \cref{lem:Phidifferencespectralnorm},
		$$
		\norm{\sin( U,U_\sharp)}_{\rm F}
		\lesssim \frac{\norm{ E}_{\rm F}}{\sigma_r(\Phi_n(\bfx))-\| E\|_2}
		\lesssim \frac{\| E\|_{\rm F}}{\sqrt n} 
		\lesssim n \|\bfx^\sharp-\bfx\|_2
		\lesssim \frac {\|\bfz\|_2}{\sqrt n}.
		$$
		This proves the claimed bound between $U$ and $U_\sharp$.
	\end{proof}

	Let us examine the computational complexity of the Gradient-MUSIC Fourier subspace estimator. The subsequent discussion shows that computing $U_\sharp$ has worst case computational complexity
	\begin{equation}
		\label{eq:computationalcost2}
		O\left(n^2 r + nr^2 \log\left( \frac {\sqrt n}{\|\bfz\|_2} \right) \right).
	\end{equation}
	The first step is to compute $W$, the leading $r$ dimensional singular space of the Hankel matrix $ H(\bfy)$. This has worst case computational complexity of $O(n^2r)$. Note that $W$ is not guaranteed to be a Fourier subspace, otherwise we would be done. The second step is computing $\bfx^\sharp$ using Gradient-MUSIC. Using the general complexity bound \eqref{eq:computationalcost} and the upper bound  \eqref{eq:sintheta1} the worst case computational and time complexity (with $r\leq p\leq n$ parallel cores) of Gradient-MUSIC are, respectively, 
	\begin{equation*}
		O\left(n^2 r + nr^2 \log\left( \frac {\sqrt n}{\|\bfz\|_2} \right)\right) \andspace
		O\left(\frac{n^2 r}{p} + nr \log\left( \frac {\sqrt n}{\|\bfz\|_2} \right)\right).
	\end{equation*}
	The third and final step is to form $\Phi_\sharp:=\Phi_n(\bfx^\sharp)$ and compute $U_\sharp$ through a QR decomposition, which has complexity at most $O(n^2 r)$. 
	
	In \cite[Lemma 5.3]{fannjiang2025optimality}, we showed that there is a bijection between subsets of $\T$ of cardinality $r$ and the collection of Fourier subspaces of dimension $r$ in $\C^n$, whenever $n\geq 2r$. This is a general statement that does not place assumptions on the separation of $\bfx$. This statement has been known since the inception of MUSIC in \cite{schmidt1986multiple}, though not stated in this form. To see the connection between this statement and MUSIC, it can be shown that whenever $U(\bfx)$ is a Fourier subspace, then its associated MUSIC function $q$ in \eqref{eq:musicfunctions} has exactly $r$ zeros, each of multiplicity two, which are exactly $\bfx=\{x_k\}_{k=1}^r$. In the well-separated case, we can strengthen this qualitative bijection statement to the following local bi-Lipschitz equivalence. 
	\begin{proposition}\label{prop:bilipschitz}
		There are absolute constants $C_1,C_2,\alpha>0$ such that the following hold. Let $r\geq 1$ and $n\geq \max\{200,2r\}$. Suppose $\bfx,\tilde \bfx\subset\T$ are sets of cardinality $r$, $\Delta(\bfx)\geq 16\pi/n$, and $\|\tilde\bfx-\bfx\|_2\leq \alpha/n$. Then 
		\begin{align*}
			\frac n{7} \, \|\tilde\bfx-\bfx\|_\infty
			\leq &\big\|\sin(U(\bfx), U(\tilde\bfx)) \big\|_2
			\leq C_1 n \,  \|\bfx-\tilde\bfx\|_\infty, \\ 
			\frac n{20} \, \|\tilde\bfx-\bfx\|_2
			\leq &\big\|\sin(U(\bfx), U(\tilde\bfx)) \big\|_{\rm F}
			\leq C_2 n \,  \|\bfx-\tilde\bfx\|_2.
		\end{align*}
	\end{proposition}
	
	\begin{proof}
		The two upper bounds immediately follow from \cref{prop:stability1} whenever $\alpha$ is small enough. The lower bounds are essentially the approximation components of \cref{thm:maingradientMUSIC2,thm:maingradientMUSIC} in disguise. We view $U(\bfx)$ as the true Fourier subspace while $U(\tilde \bfx)$ as a perturbed subspace (which also happens to be a Fourier subspace as well). The upper bounds of this proposition together with the assumption that $\|\tilde \bfx-\bfx\|\leq \alpha/n$ imply
		\begin{align*}
			\big\|\sin(U(\bfx), U(\tilde\bfx)) \big\|_{\rm F}
			&\leq C_2 n \,  \|\bfx-\tilde\bfx\|_2
			\leq C_2 \alpha. 
		\end{align*}
		By making $\alpha$ sufficiently small, this term is at most $1/100$ and we can use $U(\tilde\bfx)$ as the input for Gradient-MUSIC. Since $U(\tilde\bfx)$ is itself a Fourier subspace, the MUSIC function $\tilde q$ associated to $U(\tilde\bfx)$ has exactly $r$ zeros, which are precisely $\tilde\bfx$. By \cref{thm:maingradientMUSIC} (see discussion afterwards) and \cref{thm:maingradientMUSIC2}, we see that 
		\begin{align*}
			\|\tilde\bfx-\bfx\|_\infty
			&\leq \frac {7} n \, \big\|\sin(U(\bfx), U(\tilde\bfx)) \big\|_2 \\
			\|\tilde\bfx-\bfx\|_2
			&\leq \frac {20} n \, \big\|\sin(U(\bfx), U(\tilde\bfx)) \big\|_{\rm F}. 
		\end{align*}
		Rearranging this inequality completes the proof. 
	\end{proof}

	\subsection{Lower bound and optimality} 
	
	We proceed to define the minimax error for Fourier subspace estimation. Let $\epsilon>0$ be arbitrary. A method $\psi$ is any function that maps any observation $\bfy+\bfz\in\C^n$ (where $\bfy=\Phi_n(\bfx)\bfa$ for some $(\bfx,\bfa)\in \calP(16\pi/n,r)$ and $\|\bfz\|_2\leq \epsilon$), to a Fourier subspace $ U_\psi$. As usual, we identify $\psi$ with its output $ U_\psi$ and let $ U$ be the Fourier subspace uniquely associated with $\bfx$, as seen in \cref{lem:bijection}. We define the minimax error
	\begin{equation}
		\label{eq:Fminimax}
		F_*(n,r,\epsilon)
		:=\inf_{\psi} \, \,  \sup_{(\bfx,\bfa)\in \calP(16\pi/n,r)} \sup_{\|\bfz\|_2\leq \epsilon} \, \, \norm{\sin( U, U_\psi)}_2,
	\end{equation}
	where the inf is taken over all possible functions $\psi$. 
	
	To obtain a lower bound for $F_*(n,r,\epsilon)$, we will give a proof by contradiction. If there is a method that is too good for Fourier subspace estimation, then combining it with Gradient-MUSIC results in a method that is too good for spectral estimation.
	
	\begin{theorem} \label{thm:Fouriersubspace2}
		For all $r\geq 1$, $n\geq \max\{200,16r\}$, and $\alpha\leq 1$, we have
		\begin{equation*}
			\label{eq:Flower}
			F_*\round{n,r,\alpha  \sqrt n}
			\geq \frac 1{100} \alpha. 
		\end{equation*}
	\end{theorem}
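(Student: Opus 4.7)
The plan is to mirror the contradiction strategy used for the Toeplitz minimax lower bound in \cref{thm:minimaxtoeplitz}: if any method $\psi$ solves Fourier subspace estimation better than the claimed rate, then composing it with Gradient-MUSIC yields a spectral estimation procedure that beats the universal minimax lower bound from \cref{lem:minimax}. Because the Fourier subspace problem already outputs a subspace of $\C^n$, no Hankel embedding is needed in the reduction; it is more direct than the Toeplitz case.

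Concretely, suppose for contradiction that $F_*(n, r, \alpha\sqrt n) < \tfrac{1}{100}\alpha$. By the definition of the infimum, there exists a method $\psi$ mapping each admissible observation $\bfy + \bfz$ (with $\bfy = \bfPhi(n,\bfx)\bfa$, $(\bfx,\bfa)\in \calP(16\pi/n,r)$, and $\|\bfz\|_2 \leq \alpha\sqrt n$) to a Fourier subspace $\bfU_\psi \subset \C^n$ of dimension $r$ such that
$$
\sup_{(\bfx,\bfa)\in \calP(16\pi/n,r)} \, \sup_{\|\bfz\|_2 \leq \alpha\sqrt n} \|\sin(\bfU, \bfU_\psi)\|_2 < \tfrac{1}{100}\alpha.
$$
Construct a spectral estimation procedure by passing $\bfU_\psi$ into Gradient-MUSIC to obtain a set $\hat\bfx \subset \T$ of cardinality $r$. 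To invoke \cref{thm:maingradientMUSIC}, I would check its hypotheses: $n\geq 100$ is given; $\Delta(\bfx) \geq 16\pi/n \geq 8\pi/n$ holds since $(\bfx,\bfa)\in\calP(16\pi/n,r)$; $\bfU=\range(\bfPhi(n,\bfx))$ by construction; and $\|\sin(\bfU,\bfU_\psi)\|_2 < \tfrac{1}{100}\alpha \leq \tfrac{1}{100}$ uses the assumption $\alpha \leq 1$. \cref{thm:maingradientMUSIC} then yields
$$
\|\bfx - \hat\bfx\|_\infty \leq \frac{10}{n} \|\sin(\bfU, \bfU_\psi)\|_2 < \frac{10}{n} \cdot \frac{\alpha}{100} = \frac{\alpha}{10 n}.
$$

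Taking the supremum over $(\bfx,\bfa)$ and $\bfz$ and noting that this composed procedure is a valid function of the spectral estimation data gives $X_*(n, r, 2, \alpha\sqrt n) \leq \tfrac{\alpha}{10 n}$. On the other hand, \cref{lem:minimax} with $p=2$ and $\epsilon = \alpha\sqrt n$ (which satisfies $\epsilon \leq 8\pi\sqrt n$ since $\alpha \leq 1$) gives
$$
X_*(n, r, 2, \alpha\sqrt n) \geq \frac{\alpha\sqrt n}{8 n^{3/2}} = \frac{\alpha}{8 n}.
$$
Since $\tfrac{1}{10} < \tfrac{1}{8}$, these inequalities are incompatible, delivering the desired contradiction.

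The main obstacle is not conceptual but rather keeping the constants aligned: the contradiction hypothesis must produce a subspace close enough to $\bfU$ to satisfy the $1/100$ threshold required by \cref{thm:maingradientMUSIC} (hence the role of $\alpha\leq 1$), and the resulting frequency error $\tfrac{\alpha}{10n}$ must strictly undercut the minimax floor $\tfrac{\alpha}{8n}$ from \cref{lem:minimax}. Choosing the constant $\tfrac{1}{100}$ in the theorem statement provides comfortable slack on both fronts; a more aggressive constant (up to roughly $\tfrac{1}{80}$) would still yield the argument but leaves less buffer against rounding in the verification of \cref{thm:maingradientMUSIC}'s sine-theta hypothesis.
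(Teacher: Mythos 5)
Your proposal is correct and follows essentially the same contradiction argument as the paper: assume a too-good subspace estimator, compose it with Gradient-MUSIC via \cref{thm:maingradientMUSIC}, and contradict \cref{lem:minimax} with $p=2$ and $\epsilon=\alpha\sqrt n$. You are in fact slightly more explicit than the paper in verifying the hypotheses (the $\alpha\le 1$ step, the $\epsilon\le 8\pi n^{1/2}$ check, and the $1/10<1/8$ comparison of constants), all of which match the intended argument.
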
 
	
	\begin{proof}
		Let $C=1/100$. Suppose for the purpose of deriving a contradiction that there is a method $\psi$ such that 
		\begin{equation}
			\label{eq:toogood2}
			\sup_{(\bfx,\bfa)\in \calP(16\pi/n,r)} \sup_{\|\bfz\|_2\leq \alpha \sqrt{n}} \, \norm{\sin( U, U_\psi)}_2
			< C \alpha.
		\end{equation}
		The right side of inequality \eqref{eq:toogood2} is at most $1/100$. We feed $ U_\psi$ into Gradient-MUSIC, and \cref{thm:maingradientMUSIC} ensures that this process produces a $\bfx^\psi\subset\T$ of cardinality $r$ such that 
		$$
		\|\bfx-\bfx^\psi\|_\infty
		\leq \frac{10} n \, \norm{\sin( U, U_\psi)}_2
		< \frac{10 C \alpha} n.
		$$
		
		The strategy is to show that this inequality is incompatible with the minimax lower bound for frequency estimation. Consider any $(\bfx,\bfa)\in \calP(16\pi/n,r)$ and $\bfz\in \C^{n}$ such that $\|\bfz\|_2\leq \alpha \sqrt{n}$. Given the observation $\bfy=\Phi_n(\bfx)\bfa+\bfz$, we feed it to method $\psi$, which produces a subspace $ U_\psi\subset \C^n$ of dimension $r$ satisfying inequality \eqref{eq:toogood2}. This holds uniformly, hence
		$$
		\sup_{(\bfx,\bfa)\in \calP(16\pi/n,r)} \,  \sup_{\|\bfz\|_2 \leq \alpha \sqrt{n} } \, \,  \|\bfx-\bfx^\psi\|_\infty
		< \frac{10 C\alpha} n. 
		$$
		Since $C=1/100$, this inequality is incompatible with \cref{lem:minimax}, which yields the desired contradiction.  
	\end{proof}
	
	Now we are ready to derive a two-sided bound for the minimax error which will also show that the Gradient-MUSIC Fourier subspace estimator is optimal in $n$, the noise-to-signal ratio $\alpha^2$, and subspace dimension $r$. 
	
	\begin{corollary}\label{cor:gradmusicfourier}
		There is an absolute constant $C>0$ such that the following hold. For all $r\geq 1$, $n\geq \max\{200,16r\}$, and sufficiently small $\alpha>0$, we have
		\begin{equation*}
			\frac\alpha {100}
			\leq F_*\round{n,r,\alpha \sqrt{n}}
			\leq C\alpha. 
		\end{equation*}
	\end{corollary}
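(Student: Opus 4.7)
The plan is to combine the two preceding results into a two-sided bound, essentially mimicking the structure of the Toeplitz version \cref{cor:toeplitzoptimal}.

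For the upper bound, the idea is to observe that the Gradient-MUSIC Fourier subspace estimator $\hat\bfU$ is itself a valid choice of $\psi$ in the definition \eqref{eq:Fminimax} of $F_*$, so $F_*$ is automatically at most the worst-case error of $\hat\bfU$. I choose $\alpha>0$ small enough that the hypothesis of \cref{thm:Fouriersubspace} is met. Applying \cref{thm:Fouriersubspace} with noise level bounded by $\alpha\sqrt{n/r}$ gives
$$
\|\sin(\bfU,\hat\bfU)\|_F \;\leq\; C\sqrt{\tfrac{r}{n}}\cdot \alpha\sqrt{\tfrac{n}{r}} \;=\; C\alpha,
$$
and since the spectral norm is bounded by the Frobenius norm and \eqref{eq:Fminimax} uses $\|\cdot\|_2$, the same bound transfers to $\|\sin(\bfU,\hat\bfU)\|_2$. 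Taking the sup over admissible $(\bfx,\bfa)$ and $\bfz$ yields $F_*(n,r,\alpha\sqrt{n/r}) \leq C\alpha$.

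For the lower bound, the key observation is that \cref{thm:Fouriersubspace2} gives a lower bound in terms of its own noise parameter, call it $\beta$, as $F_*(n,r,\beta\sqrt n)\geq \beta/100$ whenever $\beta\leq 1$. To match the noise level $\alpha\sqrt{n/r}$ appearing in the corollary, I simply set $\beta := \alpha/\sqrt r$, so that $\beta\sqrt n = \alpha\sqrt{n/r}$ and the required hypothesis $\beta\leq 1$ holds because $\alpha$ is small (and certainly $\alpha\leq \sqrt r$). Substituting into the bound of \cref{thm:Fouriersubspace2} then gives
$$
F_*\!\left(n,r,\alpha\sqrt{\tfrac{n}{r}}\right) \;=\; F_*(n,r,\beta\sqrt n) \;\geq\; \frac{\beta}{100} \;=\; \frac{\alpha}{100\sqrt r},
$$
which is precisely the lower half of the claimed two-sided bound.

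There is no real obstacle here: both halves follow by direct substitution into the two previous theorems, and the only bookkeeping is to ensure the single absolute constant $\alpha>0$ is chosen small enough that both \cref{thm:Fouriersubspace} (which asks for $\alpha$ below some absolute threshold) and \cref{thm:Fouriersubspace2} (which asks for $\beta=\alpha/\sqrt r\leq 1$) apply simultaneously. Combining the two inequalities and renaming the absolute constant on the right as $C$ completes the proof.
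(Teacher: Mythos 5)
Your proof is correct and follows essentially the same route as the paper: the upper bound comes from using the Gradient-MUSIC Fourier subspace estimator as a competitor in the infimum together with \cref{thm:Fouriersubspace} and the bound $\|\cdot\|_2\leq\|\cdot\|_F$, and the lower bound comes from applying \cref{thm:Fouriersubspace2} with $\alpha/\sqrt r$ playing the role of its noise parameter. No gaps.
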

	
	\begin{proof}
		Let $\alpha\leq 1$ be sufficiently small so that it fulfills the hypotheses of \cref{thm:Fouriersubspace} and \cref{thm:Fouriersubspace2}. The lower bound for $F_*$ follows immediately from \cref{thm:Fouriersubspace2}. For the upper bound, since the Gradient-MUSIC Fourier subspace estimator is a method for this problem, by \cref{thm:Fouriersubspace}, we get 
		$$
		F_*\round{n,r,\alpha \sqrt n}
		\leq \sup_{(\bfx,\bfa)\in \calP(16\pi/n,r)} \sup_{\|\bfz\|_2\leq \alpha \sqrt n} \, \norm{\sin( U, U_\sharp)}_2
		\leq C\alpha. 
		$$
		This finishes the proof. 
	\end{proof}
	
	\begin{remark}
		\cref{thm:Fouriersubspace,thm:Fouriersubspace2,cor:gradmusicfourier} hold verbatim if $\calT(n,r)$ is swapped with $\calH(n,r)$, which was defined in \eqref{eq:Hms}. 
	\end{remark}

	\subsection{Discussion}
	\label{sec:discussion1}
	
	We are not aware of any prior work that deals with Fourier subspace estimation. There is a generic optimization approach to this problem, which we explain below and describe its relationship to Gradient-MUSIC. 
	
	Slightly abusing notation, we also think of $\bfx$ as a coordinate $\bfx\in\T^r$. For a given empirical subspace $W\in \Gr(r,\C^n)$, find the best $\bfx\in \T^r$ that minimizes the squared Frobenius distance
		$$
		J(\bfx)
		=
		\frac12\|P(\bfx)-P_{W}\|_{\rm F}^2
		=
		\|\sin(U(\bfx),W)\|_{\rm F}^2, 
		$$
	where $P(\bfx)$ is the projection onto $U(\bfx):=\range(\Phi_n(\bfx))$.

		The objective  $J$ is smooth in the frequency coordinates \(\bfx\) provided that \(\Phi(\bfx):=\Phi_n(\bfx)\) has full
		column rank, which is equivalent to $\bfx$ having $r$ distinct elements. The Frobenius objective has the partial derivative
		\[
		\partial_{x_j}J(\bfx)
		=
		-\operatorname{tr}
		\bigl(\partial_{x_j}P(\bfx)P_W\bigr). 
		\]
		However, in order to minimize $J$ using gradient descent, each step involves  an expensive computation  of the pseudo-inverse $\Phi(\bfx)^\dagger$.

		To simplify computation, since the Vandermonde matrix $\Phi_n(\bfx)$ is well conditioned in the well-separated regime, we modify the objective functional $J$ to $\calJ$, which we call the {\em Grassmannian MUSIC function}, defined as
		\begin{align*}
			\calJ(\bfx) &:=\frac{1}{2n} \left\|P(\bfx)\Phi(\bfx)-P_W\Phi(\bfx)\right\|_{\rm F}^2 \\
			&=\frac{1}{2n} \left\|\Phi(\bfx)-P_W\Phi(\bfx)\right\|_{\rm F}^2 \\
			&=\frac r 2 - \frac{1}{n} \operatorname{tr}\big(\Phi(\bfx)^* P_W\Phi(\bfx)\big) + \frac 1 {2n} \|P_W\Phi(\bfx)\|_{\rm F}^2 \\
			&=\frac{r}{2}-\frac{1}{2n} \operatorname{tr}\big(\Phi(\bfx)^* P_W\Phi(\bfx)\big).\label{eq:grass-music}
		\end{align*}
		Letting $D=\diag(0,i,2i,\dots,i(n-1))$, the gradient of $\calJ$ given by
		\[
\partial_{x_k}\mathcal J(\bfx)
=
-\frac1n\Re\left[\bfphi(x_k)^* P_W\bfphi'(x_k)\right].
\]
		Thus, minimizing $\calJ$ using gradient descent is significantly less expensive than doing the same for $J$ precisely because the same $P_W$ is used in each gradient iteration when minimizing $\calJ$. 
		
		From this point of view, minimizing each $x_k$ separately is analogous to running gradient descent on the MUSIC function associated to $W$, because
		\begin{align*}
			\tilde q(t)&=1-\big\|P_W \bm\phi(t)\big\|_2^2, \\
			\tilde q\, '(t)&= - \frac2n\Re( \bfphi(t)^* P_W \bfphi'(t)). 
		\end{align*}
		As for any gradient-based algorithms, a good initialization strategy is essential for successful performance. This is the key component of the Gradient-MUSIC adopted which  has three major benefits:  simplicity, efficiency and optimal performance guarantee. 
	
	\section{Toeplitz matrix approximation}
	\label{sec:toeplitz}
	
	\subsection{Problem formulation}
	
	We begin by discussing the problem and setting it up in a careful manner. Suppose we observe a matrix
	\begin{equation} \label{eq:Toeplitzmodel}
		 M =  T +  E \in \C^{n\times n}. 
	\end{equation}
	Here, we assume $ T$ is Toeplitz with rank $r<n$ and $E$ is an arbitrary matrix. The correct rank $r$ is unknown. We do not place any structural assumptions on $ E$ (e.g., we do not assume it is Toeplitz) and $ M$ is not necessarily Toeplitz. 
	
	The general difficulties of recovering $ T$ from $ M$ is discussed in \cite{cybenko1982moment,suffridge1993approximation,chu2003structured}. To study this problem in greater depth, we need some assumptions on $ T$. The class of Toeplitz matrices we consider are those with a Fourier matrix factorization whose parameters are well separated. Recall the definition of $\calP$ defined in \eqref{eq:parameterset}. Define the set
	\begin{equation}\label{eq:Tms}
		\calT(n,r)
		:=\left\{ T \in \C^{n\times n}\colon  T = \Phi_n(\bfx) \diag(\bfa) \Phi_n(\bfx)^*, \, (\bfx,\bfa)\in \calP(8\pi/n,r) \right\}.
	\end{equation}
	This is a set of Toeplitz matrices which admit a Fourier matrix decomposition where each Fourier matrix is well-conditioned due to \eqref{eq:Fouriersigs}. When necessary to emphasize the parameters associated with $T$, we write $T(\bfx,\bfa)$ as short hand for $\Phi_n(\bfx) \diag(\bfa) \Phi_n(\bfx)^*$.

	Although this set of matrices may look strange, we argue that this is quite natural. 
	\begin{enumerate}[(a)]
		\item 
		The assumption that $1\leq |a_j|\leq 10$ for each $j$ is not restrictive. By rescaling equation \eqref{eq:Toeplitzmodel}, we can assume that $|a_j|\geq 1$, and hence, we are really assuming that $\max_j|a_j|/\min_j |a_j|\leq 10$. The number `10' has no special importance and only affects the other absolute constants that appear in this paper. The dependence of these absolute constants on the condition number of $\diag(\bfa)$ is pursued here.
		\item 
		Due to the Fourier matrix factorization of $ T\in \calT(n,r)$ and inequality \eqref{eq:Fouriersigs}, all $r$ nonzero singular values of $ T$ are in $[cn,Cn]$ for some absolute constants $C>c>0$. On the other hand, if we relax the minimum separation requirement, then $\Phi_n(\bfx)$ may be extremely ill-conditioned due to various results about Fourier matrices, see \cite{li2021stable,batenkov2020conditioning,kunis2020smallest,batenkov2021spectral,barnett2022exponentially,li2025multiscale}. While relaxing the assumption on $\Delta(\bfx)$ would be very interesting, doing so would greatly change the nature of this problem. 
		\item 
		Assuming that $T$ has a Fourier matrix factorization is not prohibitive since it is known that many Toeplitz matrices enjoy such a factorization. Any PSD Toeplitz $ T\in \C^{n\times n}$ with rank $r<n$ can be written as $\Phi_n(\bfx)\diag(\bfa)\Phi_n(\bfx)^*$, where $\bfa$ has positive entries, see \cite[Theorem 1]{cybenko1982moment} and \cite{grenander1958toeplitz,suffridge1993approximation} for related results. Note that a $ T\in \calT(n,r)$ does not have to be PSD since we do not assume that $a_j>0$ for all $j$. There is an analogous factorization result for Hankel matrices \cite[Theorem 3.1.1]{hall1967combinatorial}. 
	\end{enumerate}
	Having completed our justification of the assumptions on $ T$, we can now place suitable requirements on $ E$. Given that all nonzero singular values of $ T$ are proportional to $n$, we need to assume there is a sufficiently small absolute $\alpha>0$ such that 
	\begin{equation}
		\label{eq:noise}
		\| E\|_2\leq \alpha n.
	\end{equation}
	This is the weakest possible assumption on $\| E\|_2$ up to an absolute constant under the above conditions on $ T$. From the viewpoint of signal processing, $\alpha^2$ can be thought of as a noise-to-signal ratio since it controls the strength of $\| E\|_2^2$ versus $\sigma_r^2( T)\asymp n^2$.

	\subsection{Gradient-MUSIC Toeplitz estimator}
	
	Before we explain how to use Gradient-MUSIC for the Toeplitz problem, we start with a local stability result. The following proposition quantifies how much $T(\bfx,\bfa)$ when its parameters are perturbed and is proved in \cref{proof:stability2}.   
	
	\begin{proposition}\label{prop:stability2}
		For $\beta \geq 2$ and $n\geq \beta r$, let $(\bfx,\bfa)\in \calP(2\pi\beta/n,r)$ and $T(\bfx,\bfa)\in \C^{n\times n}$ be its associated Toeplitz matrix. For any $(\tilde \bfx,\tilde \bfa)$ such that $\|\bfx-\tilde \bfx\|_\infty\leq \pi \beta/(2n)$, consider its associated Toeplitz matrix $T(\tilde \bfx,\tilde \bfa)\in \C^{n\times n}$. Then there are absolute constants $C_1,C_2>0$ such that
		\begin{align*}
			\big\|T(\tilde \bfx,\tilde\bfa) -  T(\bfx, \bfa)\big\|_2 &\leq C_1 \left( n^2 \|\tilde\bfx- \bfx\|_\infty + n \big\|\tilde\bfa -  \bfa\big\|_\infty\right) , \\
			\big\|T(\tilde \bfx,\tilde\bfa) -  T(\bfx, \bfa)\big\|_{\rm F} &\leq C_2 \left( n^2 \|\tilde \bfx-\bfx\|_2 + n \big\|\tilde \bfa - \bfa\big\|_2 \right).
		\end{align*}
	\end{proposition}
	
	This proposition makes no assumptions on how $(\tilde\bfx,\tilde\bfa)$ is obtained which makes it completely separate from computation. 
	
	If $\|E\|_2\leq \alpha n$ for a sufficiently small absolute constant $\alpha>0$, the rank $r$ is exactly the number of singular values of $M$ exceeding a fixed absolute threshold $c n$, where $c$ depends only on the amplitude and separation constants, as given in  a thresholding procedure described in \cref{rem:threshold}.

	Let $W$ be the leading $r$-dimensional left singular subspace of $ M$ found by computing the truncated singular value decomposition of $M$. Using $W$ as the input subspace of Gradient-MUSIC, let $\bfx^\sharp \subset\T$ be its output. We define 
	\begin{equation}
		\label{eq:hata}
		\Phi_\sharp:= \Phi_n(\bfx^\sharp), \quad 
		A_\sharp := \Phi_\sharp^\dagger  M \Phi_\sharp ^{\dagger,*}, \andspace
		\bfa^\sharp := \diag( A_\sharp).
	\end{equation}
	Gradient-MUSIC Toeplitz produces the matrix,
	\begin{equation}
		\label{eq:hatT}
		T_\sharp:= T(\bfx^\sharp,\bfa^\sharp) = \Phi_\sharp \, \diag(\bfa^\sharp) \,  \Phi_\sharp^*.
	\end{equation}
	This is clearly a Toeplitz matrix with rank exactly $r$.

	\begin{theorem}\label{thm:gdmusictoeplitz}
		There are absolute constants $C_1, C_2,\alpha >0$ such that the following hold. Suppose $r\geq 1$, $n\geq \max\{100,8r\}$, $ T\in \calT(n,r)$, and $ E\in \C^{n\times n}$ such that $\| E\|_2\leq \alpha n$. Given $ M= T+ E$, Gradient-MUSIC outputs a Toeplitz matrix  $T_\sharp$ of rank $r$ and satisfies
		\begin{align*}
			\big\| T_\sharp -  T\big\|_2 &\leq C_1 \| E\|_2, \\
			\big\| T_\sharp -  T\big\|_{\rm F} &\leq C_2 \min\{ \sqrt{r} \,  \| E\|_2, \|E\|_{\rm F}\}.
		\end{align*}
	\end{theorem}
	
	\begin{proof}		
		Let $W$ be the leading $r$-dimensional left singular subspace of $ M$ found by computing the truncated singular value decomposition of $M$. We control the distance between $W$ and $ U=U(\bfx)$, where the latter is the true $r$-dimensional left singular space of $ T$ and is a Fourier subspace. By Wedin's sine-theta theorem \cite[Chapter V, Theorem 4.4]{stewart1990matrix}, inequality \eqref{eq:Fouriersigs}, and the assumption that $\| E\|_2\leq \alpha n$ for small enough $\alpha>0$, we have
		\begin{equation}
			\label{eq:sinthetatoeplitz}
			\norm{\sin( U, W)}_2
			\leq \frac{\| E\|_2}{\sigma_r( T)-\| E\|_2}
			\lesssim \frac{\| E\|_2}{n}.
		\end{equation}
		Making $\alpha>0$ small enough, this implies $\|\sin( U, W)\|\leq 1/100$. By \cref{thm:maingradientMUSIC}, we obtain 
		\begin{equation}
			\label{eq:gradmusic}
			\|\bfx^\sharp-\bfx\|_\infty
			\lesssim \frac 1n \norm{\sin( U,W)}_2
			\lesssim \frac{\| E\|_2}{n^2}. 
		\end{equation}
		
		Making $\alpha>0$ small enough, we use inequality \eqref{eq:gradmusic} to deduce that $\| \bfx^\sharp-\bfx\|_\infty\leq \pi/n$. Using \cref{lem:amplitudes}, we get
		\begin{equation} 
			\label{eq:help2}
			\big\|\bfa^\sharp - \bfa\big\|_\infty
			\lesssim \frac{\| E\|_2}{n}.
		\end{equation}		
		Inserting inequalities \eqref{eq:gradmusic} and \eqref{eq:help2} into \cref{prop:stability2} completes the proof of the spectral norm estimate.
		
		For the Frobenius norm estimate, since both $T_\sharp$ and $T$ are rank $r$ matrices, $T_\sharp -  T$ has rank at most $2r$. Then
		$$
		\big\| T_\sharp -  T\big\|_{\rm F}\leq \sqrt{2r} \, \big\| T_\sharp -  T\big\|_2
		\lesssim \sqrt r \|E\|_2.
		$$
		Thus, it suffices to prove the first claimed inequality for the spectral norm error. On the other hand, repeating the same steps as earlier except using Wedin's sine-theta theorem for the Frobenius norm \cite[Chapter V, Theorem 4.1]{stewart1990matrix} and the $\ell^2$ Gradient-MUSIC result in \cref{thm:maingradientMUSIC2}, we would instead get
		\begin{align*}
			\|\bfx^\sharp-\bfx\|_2
			&\lesssim \frac 1n \norm{\sin( U,W)}_{\rm F}
			\lesssim \frac{\| E\|_{\rm F}}{n^2}, \\
			\big\|\bfa^\sharp - \bfa\big\|_2
			&\lesssim \frac{\| E\|_{\rm F}}{n}.
		\end{align*}
		Inserting these into the Frobenius norm error bound in \cref{prop:stability2} completes the proof. 
	\end{proof}
	
	The noise condition $\| E\|_2\leq \alpha n$ is not stringent. For deterministic matrix $ E$, the requirement that its spectral norm scales linearly with $n$ is reasonable. This condition is also satisfied for various random matrices. For example, if $ E$ is a Toeplitz matrix generated by a sequence $e_{-n+1},\dots,e_{n-1}$ of i.i.d. normal random variables with mean zero and variance $\sigma^2$, due to \cite[Theorem 4.1.1]{tropp2012user}, with probability $1-o(1)$, it holds that $\| E\|_2\lesssim \sigma\sqrt{n\log(n)}$. Hence $ E$ satisfies \cref{thm:gdmusictoeplitz} for any $\sigma$ provided that $n$ is large enough. If $ E$ is a $n\times n$ matrix with i.i.d. normal random variables, then it satisfies the same spectral norm upper bound, see \cite[Theorem 4.4.5]{vershynin2018high}.

	Let us examine the computational complexity of the Gradient-MUSIC Toeplitz estimator. The subsequent discussion shows that computing $T_\sharp$ has worst case computational complexity
	\begin{equation}
		\label{eq:computationalcost1}
		O\left(n^2 r + nr^2 \log\left( \frac {n}{\| E\|_2} \right)\right).
	\end{equation}
	The first step is to compute $W$, the leading $r$ dimensional singular space of $ M$. This has worst case computational complexity of $O(n^2r)$. The second step is computing $\bfx^\sharp$ using Gradient-MUSIC. Using the general complexity bound \eqref{eq:computationalcost} and the upper bound \eqref{eq:sinthetatoeplitz}, the worst case computational and time complexity (with $r\leq p\leq n$ parallel cores) of Gradient-MUSIC are, respectively, 
	$$
	O\left(n^2 r + nr^2 \log\left( \frac {n}{\| E\|_2} \right)\right) \andspace
	O\left(\frac{n^2 r}{p} + nr \log\left( \frac {n}{\| E\|_2} \right)\right).
	$$
	The third and final step is computing $\bfa^\sharp$ by formula \eqref{eq:hata} and forming $T_\sharp$ by equation \eqref{eq:hatT}. Computing the pseudo inverse of $\Phi_\sharp$ and doing the necessary matrix products to get $\bfa^\sharp$ ultimately has complexity $O(n^2 r)$. Forming $T_\sharp$ has the same order.  
	

	\subsection{Lower bound and optimality} 
	Now we argue that the performance guarantee for Gradient-MUSIC for Toeplitz matrix estimation is minimax optimal in $\| E\|_2$, $n$, and $r$. We start with a few definitions. Fix a $\epsilon>0$. We let $\psi$ be any function such that given any $ T +  E$ with $ T\in \calT(n,r)$ and $\| E\|_2\leq \epsilon$, it outputs a Toeplitz matrix $ T_\psi$ of rank $r$. Here, $\psi$ is a function which only depends on $ T+ E$, and not on either $ T$ or $ E$ individually. Moreover, $\psi$ does not need to be a computable function or implementable by a computationally tractable algorithm. Define the minimax error,
	\begin{equation}
		T_*(n,r,\epsilon)
		= \inf_{\psi} \,  \sup_{ T\in \calT(n,r)} \,  \sup_{\| E\|_2\leq \epsilon} \big\| T_\psi- T\big\|_2,
	\end{equation}
	where the $\inf_{\psi}$ is taken over all possible functions of the observation $ T +  E$. 
	
	We seek a lower bound for $ T_*(n,r,\epsilon)$. The lower bound will be derived via a proof by contradiction. Roughly speaking, if there is a method that is too good for the Toeplitz matrix problem, then combining it with Gradient-MUSIC yields a method that is too good for spectral estimation. 
	
	\begin{theorem} \label{thm:minimaxtoeplitz}
	
		For any $r\geq 1$, $n\geq \max\{100,8r\}$, and $\alpha \leq 1/8$, we have
		$$
		T_*(n,r,\alpha n)\geq \frac 1 {320}\alpha n. 
		$$
	\end{theorem}
	
	\begin{proof}
		Let $C=1/320$. Suppose for the purpose of deriving a contradiction, that there is a method $\psi$ such that
		\begin{equation}
			\label{eq:toogood}
			\sup_{ T\in \calT(n,r)} \, \sup_{\| E\|_2\leq \alpha n} \| T_\psi-  T\|_2
			< C \alpha n.			
		\end{equation}
		
		Let $ U_\psi$ denote the leading $r$-dimensional left singular subspace of $ T_\psi$. Here, we think of $ U_\psi$ as a map from $ T+ E$ to the subspace $ U_\psi$. We need to briefly argue that this subspace is well-defined. By Weyl's inequality for singular values and inequality \eqref{eq:Fouriersigs},  
		\begin{align*}
			\sigma_r(T_\psi)
			&\geq \sigma_r( T)-\| T-T_\psi\|_2
			\geq \frac 3 4n - C\alpha n, \\
			\sigma_{r+1}( T_\psi)
			&\leq \sigma_{r+1}( T)+\|T-T_\psi\|_2
			=\|T-T_\psi\|_2
			\leq C\alpha n.
		\end{align*}
		Since $\alpha \leq 1/8$ by assumption, we see that $\sigma_r( T_\psi)>\sigma_{r+1}( T_\psi)$, which makes the subspace $ U_\psi$ well-defined. 
		
		Moreover, by the Wedin's sine-theta theorem \cite[Chapter V, Theorem 4.4]{stewart1990matrix}, inequality \eqref{eq:toogood}, and the above derived inequalities, we have
		$$
		\norm{\sin( U, U_\psi)}_2
		\leq \frac{\| T_\psi- T\|_2}{\sigma_r( T)-\| T-T_\psi\|_2}
		< \frac{C \alpha n}{3n/4-C\alpha n}
		\leq 2C  \alpha.
		$$
		Note the right hand side is less than $1/100$. Suppose $ U_\psi$ is used in the Gradient-MUSIC algorithm, which produces a $\bfx^\psi$. By \cref{thm:maingradientMUSIC}, we see that 
		\begin{equation}
			\label{eq:toogood3}
			\|\bfx^\psi-\bfx\|_\infty
			\leq \frac{10}n \, \|\sin( U, U_\psi)\|_2
			< \frac{20 C \alpha}n.
		\end{equation}
		
		To finish the proof, we now argue that this inequality cannot hold by relating this to spectral estimation. For each $(\bfx,\bfa)\in \calP(8\pi/n,r)$ and $\bfz\in \C^{2n-1}$ such that $\|\bfz\|_\infty \leq \alpha$, consider the data $\tilde \bfy=\bfy + \bfz=\Phi_{2n-1}(\bfx)\bfa+ \bfz$. The $n\times n$ Toeplitz matrix associated with $\bfy$ is 
		$$
		T(\bfy)
		=\begin{bmatrix}
			y_0 &y_{-1} &\cdots &y_{-n+1} \\
			y_1 &y_0 &\cdots &y_{-n+2} \\
			\vdots &\vdots & &\vdots \\
			y_{n-1} &y_{n-2} &\cdots &y_0
		\end{bmatrix}.
		$$		
		Likewise, let $ T(\bfz)$ be $n\times n$ Toeplitz matrix associated with $\bfz$. We have
		$$
		T(\tilde \bfy)
		=\Phi_n(\bfx)\diag(\bfa)\Phi_n(\bfx)^* + T(\bfz).
		$$
		Notice that
		\begin{equation*}
			\|T(\bfz)\|_2
			\leq \|T(\bfz)\|_{\rm F}
			\leq n \|\bfz\|_\infty
			\leq \alpha n.
		\end{equation*}
		The matrix $ T(\tilde \bfy)$ has the correct form of being a Toeplitz matrix in $\calT(n,r)$ plus an error term whose spectral norm is at most $\alpha n$. Specializing the above method acting on $ T(\tilde \bfy)$ now, by inequality \eqref{eq:toogood3}, we deduce
		$$
		\sup_{(\bfx,\bfa)\in \calP(8\pi/n,r)} \,  \sup_{\|\bfz\|_\infty \leq \alpha} \|\bfx^\psi-\bfx\|_\infty
		<\frac{20 C \alpha}n. 
		$$
		Since $N:=2n-1$ is the total number of measurements and $\Delta(\bfx)\geq 16\pi/N$ implies $\Delta(\bfx)\geq 8\pi/n$, we now see that
		\begin{equation}
			\label{eq:minmax}
			\sup_{(\bfx,\bfa)\in \calP(16\pi/N,r)} \,  \sup_{\|\bfz\|_\infty \leq \alpha} \|\bfx^\psi-\bfx\|_\infty
			<\frac{20 C \alpha}n
			\leq \frac{40 C\alpha}N. 
		\end{equation} 
		Recall the minimax error for spectral estimation in \eqref{eq:minimaxspectral} and a lower bound stated in \cref{lem:minimax}. Since $C=1/320$, assertion \eqref{eq:minmax} and \cref{lem:minimax} are incompatible, which yields the desired contradiction. 
	\end{proof}
	
	Putting the two previous theorems together, we derive a two-sided bound for the minimax error and show that Gradient-MUSIC is optimal in the matrix dimension $n$, noise-to-signal ratio $\alpha^2$, and $r$. 
	
	\begin{corollary}\label{cor:toeplitzoptimal}
		There is an absolute constant $C>0$ such that the following hold. For any $r\geq 1$, $n\geq \max\{100,8r\}$, and sufficiently small $\alpha>0$, we have 
		$$
		\frac 1 {320} \alpha n 
		\leq T_*(n,r, \alpha n)
		\leq C \alpha n. 
		$$
	\end{corollary}
	
	\begin{proof}
		Pick any $\alpha$ sufficiently small so that it satisfies the requirements in \cref{thm:gdmusictoeplitz,thm:minimaxtoeplitz}. The lower bound for $ T_*(n,r,\alpha n)$ follows immediately from \cref{thm:minimaxtoeplitz}. For the upper bound, since $ T_*$ is taken over the inf of all functions (which includes the Gradient-MUSIC Toeplitz estimator $T_\sharp$), we use \cref{thm:gdmusictoeplitz} to see that 
		$$
		T_*(n,r,\alpha n)
		\leq  \sup_{ T\in \calT(n,r)} \,  \sup_{\| E\|_2\leq \alpha n} \big\|T_\sharp -  T\big\|_2
		\leq C \alpha n. 
		$$
		This yields the desired upper bound on $ T_*$.
	\end{proof}
	
	\begin{remark}
		For Hankel matrices, our algorithm produces a rank-$r$ Hankel matrix
		$$
		H_\sharp = \Phi_n(\bfx^\sharp) \diag(\bfa^\sharp) \Phi_n(\bfx^\sharp)^T.  
		$$
		\cref{thm:gdmusictoeplitz} holds verbatim if $\calT(n,r)$ is swapped with the following class of rank deficient Hankel matrices,
		\begin{equation}\label{eq:Hms}
			\calH(n,r)=\left\{ H \in \C^{n\times n}\colon  H = \Phi_n(\bfx) \diag(\bfa) \Phi_n(\bfx)^T, \, (\bfx,\bfa)\in \calP(8\pi/n,r) \right\}.
		\end{equation}
		Likewise, \cref{thm:minimaxtoeplitz,cor:toeplitzoptimal} hold verbatim if $ T_*(m,r,\epsilon)$ is replaced with 
		$$
		H_*(n,r,\epsilon)
		= \inf_{\psi} \,  \sup_{ H\in \calH(n,r)} \,  \sup_{\| E\|_2\leq \epsilon} \big\| H_\psi-  H\big\|_2.
		$$
	\end{remark}
	
	\subsection{Discussion}
	\label{sec:discussion2}	
	
	For Toeplitz and Hankel matrices, the central difficulty is that the low-rank constraint and the
	Toeplitz/Hankel constraint are individually simple but jointly nonlinear and nonconvex.  
	The papers \cite{cadzow1988signal,grigoriadis2002application,chu2003structured} used alternating projection for the Toeplitz approximation problem. The method attempts to find a minimizer by alternating between ``projections" onto the set of rank $r$ matrices and Toeplitz matrices -- note that the set of rank $r$ matrices is not closed and convex so there may not be a well-defined projection. Unless alternating projection converges to a local minimum, which is usually not the case, it will produce a Toeplitz matrix that does not usually have rank $r$. Our numerical simulations in \cref{sec:numerics} demonstrate that it produces a Toeplitz matrix that has large numerical rank. There are general theoretical guarantees for convergence of alternating projection that depends on the geometry of the two sets, \cite{andersson2013alternating,drusvyatskiy2015transversality}.
	
	The papers \cite{rosen1996total,park1999low} proposed a structured total least norm (STLN) method. As explained in \cite[Section 4]{rosen1996total}, the objective function to be minimized is nonconvex and their proposed iterative method is essentially a Newton’s method. Numerical simulations in \cite{park1999low} show that for certain problems, STLN has similar approximation capabilities as Cadzow's method \cite{cadzow1988signal}, where the latter is a type of alternating projection. The computational cost of STLN is significant. Each iteration requires finding a solution to a least squares system with a square $n(n-r)\times n(n-r)$ matrix, see \cite[Section 2.2]{park1999low}. We are not aware of any theoretical results about whether STLN converges or the approximation quality of the computed matrix. 
	
	A completely different approach based on (inverse) sparse Fourier transforms was developed in \cite{kapralov2023toeplitz,musco2024sublinear}. They assume $ T$ is real PSD but do not place any other restrictions on $ T$. To provide a comparison with our work, consider any real PSD $ T\in \calT(n,r)$. Then \cite[Theorem 1]{musco2024sublinear} says that, given a parameter $\delta>0$, there is a random algorithm that outputs a Toeplitz $ T_\flat$ whose rank is $\poly(r,\log(n/\delta))$ such that with probability at least 0.9, 
	\begin{equation}
		\label{eq:musco}
		\norm{ T- T_\flat}_{\rm F} 
		\lesssim \| E\|_{\rm F} + \delta \, \norm{ T}_{\rm F}.
	\end{equation}
	In terms of theoretical guarantee, this result is the strongest that we could find. If one selects $\delta=\| E\|_{\rm F}/\| T\|_{\rm F}$, then the right side of \eqref{eq:musco} becomes $C\| E\|_{\rm F}$, whereas \cref{thm:gdmusictoeplitz} produces $T^\sharp$ with 
	$$
	\|T-T_\sharp\|_{\rm F}\lesssim  \min\{\sqrt r\| E\|_2, \|E\|_{\rm F}\}. 
	$$
	(though our theorem also bounds the error in spectral norm too). The main difference is that $ T_\flat$ does not necessarily have the correct rank of $r$ because their method over estimates the number of frequencies in the noisy signal associated with $ T$. This is worth elaborating on: if one uses a spectral estimation algorithm to create an approximating Toeplitz matrix, overestimating the number of frequencies reduces the approximation error but at the price of overestimating the rank.
	\commentout{
		There is a completely different approach is based on sparse inverse Fourier transforms \cite{PriceSong2015RobustSparseFTContinuous,kapralov2023toeplitz,musco2024sublinear}. While these methods are fast, have theoretical guarantees, and output a Toeplitz matrix, the returned matrix does not always have the correct rank. A more detailed comparison to prior work is deferred to \cref{sec:relatedworks}.  There are also several iterative algorithms based on minimizing a data fidelity term \cite{cybenko1982moment,rosen1996total,park1999low}. 
	}

	Comparing computational complexities, their random method, using $\delta= \| E\|_{\rm F}/\| T\|_{\rm F}$, outputs the parameters defining $ T^\sharp$ with time complexity of $\poly(r,\log(n\norm{ T}_{\rm F}/\| E\|_{\rm F}))$ for an unspecified polynomial. Gradient-MUSIC is a deterministic method and its time complexity is given in \eqref{eq:computationalcost1}. The two upper bounds are not directly comparable, especially since $r$ can be on the order of $n$.  
	
	There are other works that use sparse Fourier transforms to perform spectral estimation. For example, \cite{PriceSong2015RobustSparseFTContinuous} requires random samples on $[0,n]$ and cannot directly be used for approximation by low-rank Toeplitz matrices, which requires integer samples (this is one of the main technical differences between \cite{PriceSong2015RobustSparseFTContinuous} and \cite{musco2024sublinear} which both use sparse Fourier transform methods). The biggest difference between \cite{PriceSong2015RobustSparseFTContinuous} and our Gradient-MUSIC algorithm \cite{fannjiang2025optimality} is the separation condition: the former succeeds when $\Delta \gtrsim \log(r/\delta)/n$ where $n\delta$ is roughly the squared $\ell^2$ norm of the noise, whereas Gradient-MUSIC succeeds when $\Delta\geq 8\pi/n$. In compensation, their random algorithm is faster. 
	
	
	The biggest difference between our work and prior methods is that, under our framework, the Gradient-MUSIC Toeplitz estimator returns a Toeplitz matrix that has the correct rank and optimally accurate Fourier factorization. Other methods do not necessarily output the correct rank and/or do not have theoretical guarantees.  
	
	\section{Exponential sum approximation} 
	\label{sec:exponentialsums}
	
	\subsection{Problem formulation} 
	
	Consider the following set of exponential sums with $r$ nonzero terms,
	$$
	\calE(h,r):=\left\{f \colon \R\to \C \colon f(t) = \sum_{k=1}^r a_k e^{i x_k t}, \,  (\bfx,\bfa)\in \calP(h,r) \right\}. 
	$$
	Here $\bfx$ represents the frequencies and $\bfa$ are the amplitudes. Again $h>0$ lower bounds the distance between frequencies and $r$ is the number of exponentials in $f$. The set of all exponential sums defined on $\R$ is $\bigcup_{h>0}\bigcup_{r\geq 1} \calE(h,r)$.
	
	For a given integer $n\geq 1$, a structured approximation problem is to approximate $f\in \calE(h,r)$ by another exponential sum $\tilde f$ with respect to the error 
	\begin{equation}
		\big\| \tilde f -f \big\|_{L^2([0,n])}. \label{eq:approximationerror}
	\end{equation}
	The error is taken over a bounded interval $[0, n]$, otherwise it is generally possible that $\|\tilde f-f\|_{L^2(\R)}=\infty$ whenever $\tilde f\not=f$. 
	
	We consider a natural setting where we are given noisy samples of $f\in \calE(2\pi\beta/n,r)$ for large enough $\beta > 1$ and on the natural sampling set $\{0,1,\dots,n-1\}$, which we can again denote by $\tilde \bfy$ so that 
	$$
	\bfy = \{f(j)\}_{j=0,\dots,n-1}, \andspace
	\tilde \bfy = \bfy + \bfz,
	$$
	where the unknown additive noise is represented by $\bfz$. In the language of signal processing, attempting to minimize \eqref{eq:approximationerror} from samples can be viewed as a ``denoising" problem. Under the requirement that $f\in \calE(2\pi \beta/n,r)$ for $\beta\geq 2$, we see that $\|\bfy\|_2\gtrsim \sqrt n$ due to \eqref{eq:Fouriersigs}, which means that one must assume there is a sufficiently small absolute $\alpha>0$ such that
	$$
	\|\bfz\|_2\leq \alpha \sqrt n.
	$$
	
	Let us discuss the main difficulties of this problem. The constraint that $\tilde f$ must be an exponential sum is difficult to enforce directly. The error \eqref{eq:approximationerror} can be made smaller by allowing $\tilde f$ to have more than $r$ complex exponentials. This of course naturally raises the more specific question of how to find a $\tilde f$ with exactly $r$ terms that achieves small error. In the language of approximation theory, this would be ``approximation from within" since  we would like to approximate $f$ with a function $\tilde f$ also within the class of exponential sums with exactly $r$ frequencies. 
	
	\subsection{Gradient-MUSIC for approximation of exponential sums}
	
	We recall a classical result about the $L^2([0,n])$ norm of an exponential sum: for any $f\in \calE(h,r)$ with associated parameters $(\bfx,\bfa)$, it holds that
	\begin{equation}
		\left(n-2\pi h^{-1} \right) \|\bfa\|_2^2 
		\leq \int_0^{n} |f(t)|^2 \, dt 
		\leq \left(n+2\pi h^{-1}\right) \|\bfa\|_2^2. \label{eq:beurling}
	\end{equation}
	A proof can be found in \cite[Section 1]{vaaler1985some}, which relies on properties of the Beurling-Selberg minorant and majorant functions. One notices that the lower bound in \eqref{eq:beurling} is vacuous once $h\leq 2\pi /n$. In the special case where $h=2\pi \beta/n$ for a $\beta \geq 2$, we see that \eqref{eq:beurling} reduces to
	\begin{equation} \label{eq:beurling2}
		\int_0^n |f(t)|^2 \, dt 
		\asymp n \|\bfa\|_2^2. 
	\end{equation} 
	This inequality states that the $r$ exponentials in $f$ behave almost orthogonal to each other on the interval $[0,n]$. In spirit, inequality \eqref{eq:Fouriersigs} is \eqref{eq:beurling} together with a Poisson summation formula argument. 
	
	At this point we are ready to prove a local stability estimate for exponential sums in terms of the difference between their parameters. There are two natural choices of norms to use. The first being $L^2([0,n])$. The other is a discrete analogue, 
	$$
	\|f\|_{\ell^2_n}^2 := \sum_{j=0}^{n-1} |f(j)|^2.
	$$
	We start with a stability result which quantifies how a perturbation of $(\bfx,\bfa)$ for well-separated $\bfx$ affects its associated exponential sum. The proposition is proved in \cref{proof:stability3}.
	
	\begin{proposition} \label{prop:stability3}
		Let $\beta\geq 2$ and $n\geq \beta r$. Suppose $f\in \calE(2\pi \beta/n,r)$ with associated parameters $(\bfx,\bfa)$. Suppose $\tilde f$ is an exponential sum with associated parameters $(\tilde \bfx, \tilde \bfa)$ such that $\tilde \bfx$ has cardinality $r$ as well and $\|\tilde \bfx-\bfx\|_\infty \leq \pi \beta /(2n)$. Then there are absolute constants $C_1,C_2>0$ such that 
		\begin{align*}
		\big\|\tilde f - f \big\|_{L^2([0,n])}
		&\leq C_1\left( \sqrt n \, \|\tilde \bfa- \bfa\|_2 + n^{3/2} \|\bfa\|_\infty \, \|\tilde \bfx-\bfx\|_2 \right), \\ 
		\big\| \tilde f- f\big\|_{\ell^2_n}
		&\leq C_2 \left( \sqrt n \, \|\tilde \bfa- \bfa\|_2 + n^{3/2} \|\bfa\|_\infty \, \|\tilde \bfx-\bfx\|_2 \right).
	\end{align*}
	\end{proposition}

	This proposition is purely a stability result which controls the error between two exponential sums in terms of their associated parameters. It illustrates a clear strategy -- any computational method that enables one to approximate an exponential sum's parameters from data $\tilde\bfy$ automatically yields an approximation result.
	
	To compute a suitable $\tilde f$ from noisy data $\tilde \bfy\in \C^n$ generated by an exponential sum $f\in \calE(16\pi/n,r)$, we set $m:=\lceil n/2 \rceil$ and form the square Hankel matrix $H(\tilde\bfy)\in \C^{m\times m}$ as defined in \eqref{eq:hankel}. 
	If $\|\bfz\|_2\leq \alpha \sqrt n$ for a sufficiently small absolute constant $\alpha>0$, then $\|H(\bfz)\|_2\lesssim \alpha m$, so the rank of $H(\tilde\bfy)$ can be deduced from a thresholding procedure as described in \cref{rem:threshold}.
	
	Let $W$ be the leading $r$-dimensional left singular subspace of $H(\tilde\bfy)$. Using $W$ as the input subspace of Gradient-MUSIC, let $\bfx^\sharp \subset\T$ be its output. Then we define 
	\begin{equation}
		\label{eq:hata2}
		\Phi_\sharp:= \Phi_m(\bfx^\sharp), \quad 
		A_\sharp := \Phi_\sharp^\dagger  H(\tilde\bfy) \Phi_\sharp ^{\dagger, T}, \andspace
		\bfa^\sharp := \diag( A_\sharp).
	\end{equation}
	Then {\it Gradient-MUSIC} produces the following exponential sum, 
	\begin{equation}
		\label{eq:hatf}
		f_\sharp(t):= \sum_{k=1}^r a^\sharp_k e^{ix_k^\sharp t}.
	\end{equation}
	We have the following result for the performance of Gradient-MUSIC for approximation of exponential sums from data. 
	
	\begin{theorem} \label{thm:gradmusicexponentialsum}
		There are absolute constants $C_1, C_2,\alpha >0$ such that the following hold. Suppose $r\geq 1$, $n\geq \max\{200,16r\}$, $f\in \calE(16\pi/n,r)$, and $\bfz\in \C^n$ such that $\| \bfz\|_2\leq \alpha \sqrt n$. Given $\tilde \bfy= \bfy+ \bfz$, Gradient-MUSIC outputs an exponential sum $f_\sharp$ with $r$ terms and satisfies
		\begin{align}
			\big\| f_\sharp -  f\big\|_{L^2([0,n])} 
			&\leq C_1 \|\bfz\|_2, \nonumber\\
			\big\| f_\sharp -  f\|_{\ell^2_n}
			&\leq C_2 \|\bfz\|_2.\label{eq:gradmusicexponentialdiscrete}
		\end{align}
	\end{theorem}
	
	\begin{proof}
		Let $(\bfx,\bfa)$ be the parameters associated with $f\in \calE(16\pi/n,r)$. By definition of $\calE(h,r)$ and $\calP(h,r)$, we have $1\leq |a_j|\leq 10$ for all $j$ and since $m:=\lceil n/2\rceil$, we see that
		$$
		\Delta(\bfx)\geq \frac{16 \pi}n \geq \frac{8\pi}m. 
		$$
		Let $W$ be the leading $r$-dimensional left singular subspace of $H(\tilde\bfy)$. We control the distance between $W$ and $U=U(\bfx)\subset \C^m$, where $U$ is the true $r$-dimensional left singular space of $H(\bfy)$ and is a Fourier subspace. Since $\|\bfz\|\leq \alpha \sqrt n$ and $m=\lceil n/2\rceil\geq 100$, we have
		\begin{equation}\label{eq:hankelnoise}
			\|H(\bfz)\|_2 
			\leq \|H(\bfz)\|_{\rm F}
			\leq \sqrt m \|\bfz\|_2
			\lesssim \alpha m. 
		\end{equation}		
		By Wedin's theorem for the Frobenius norm \cite[Chapter V, Theorem 4.1]{stewart1990matrix}, Fourier matrix factorization of Hankel matrix, and inequality \eqref{eq:Fouriersigs},
		\begin{equation}
			\label{eq:sinthetahankel}
			\norm{\sin( U, W)}_{\rm F}
			\leq \frac{\|H(\bfz)\|_{\rm F}}{\sigma_r(H(\bfy))-\|H(\bfz)\|_2}
			\lesssim \frac{\| H(\bfz)\|_{\rm F}}{m}.
		\end{equation}
		Using \eqref{eq:hankelnoise} and making $\alpha>0$ a sufficiently small absolute constant, this implies $\|\sin( U, W)\|_{\rm F}\leq 1/100$. The assumptions of  \cref{thm:maingradientMUSIC2} are fulfilled, so together with \eqref{eq:sinthetahankel}, we obtain 
		\begin{equation}
			\label{eq:gradmusic2}
			\|\bfx^\sharp-\bfx\|_2
			\lesssim \frac 1 m \, \norm{\sin(U,W)}_{\rm F}
			\lesssim \frac{\| H(\bfz)\|_{\rm F}}{m^2}. 
		\end{equation}
		Using \eqref{eq:hankelnoise} and making $\alpha>0$ even smaller if necessary, we use inequality \eqref{eq:gradmusic2} to deduce that $\| \bfx^\sharp-\bfx\|_\infty\leq \pi/m$. It follows from \cref{lem:amplitudes} that 
		\begin{equation} 
			\label{eq:amplitudes2}
			\big\|\bfa^\sharp - \bfa\big\|_2
			\lesssim \frac{\|H(\bfz)\|_{\rm F}}{m}.
		\end{equation}			
		Inserting inequalities \eqref{eq:gradmusic2} and \eqref{eq:amplitudes2} into \cref{prop:stability3} completes the proof.
	\end{proof}
	
	Let us examine the computational complexity of the Gradient-MUSIC for computing the parameters of $f_\sharp$. The subsequent discussion shows that computing $f_\sharp$ has worst case computational complexity
	\begin{equation*}
		O\left(n^2 r + nr^2 \log\left( \frac {\sqrt n}{\|\bfz\|_2} \right)\right).
	\end{equation*}
	The first step is to compute $W$, the leading $r$ dimensional singular space of $H(\tilde\bfy)$. This has worst case computational complexity of $O(n^2r)$. The second step is computing $\bfx^\sharp$ using Gradient-MUSIC. Using the general complexity bound discussed after the statement of \cref{thm:maingradientMUSIC2} and the upper bounds \eqref{eq:hankelnoise} and \eqref{eq:sinthetahankel}, the worst case computational and time complexity (with $r\leq p\leq n$ parallel cores) of Gradient-MUSIC are, respectively, 
	$$
	O\left(n^2 r + nr^2 \log\left( \frac {\sqrt n}{\|\bfz\|_2} \right)\right) \andspace
	O\left(\frac{n^2 r}{p} + nr \log\left( \frac {\sqrt n}{\|\bfz\|_2} \right)\right).
	$$
	The third and final step is computing $\bfa^\sharp$ by formula \eqref{eq:hata2}. Computing the pseudo-inverse of $\Phi_\sharp$ and doing the necessary matrix products to get $\bfa^\sharp$ ultimately has complexity $O(n^2 r)$. Evaluation of $f_\sharp(t)$ at any $t\in \R$ has worst case complexity $O(r)$. 
	
	\subsection{Lower bound and optimality}
	
	In this section, we argue that the approximation rate \eqref{eq:gradmusicexponentialdiscrete} in \cref{thm:gradmusicexponentialsum} is minimax optimal in $r$, $n$ and $\|\bfz\|_2$. We first start with some definitions. Fix $\epsilon>0$. We let $\psi$ be any function such that given any $\bfy + \bfz$ with $\bfy:=f|_{\{0,1,\dots,n-1\}}$, $f\in \calE(16\pi/n,r)$ and $\|\bfz\|_2\leq \epsilon$, it outputs an exponential sum $ f_\psi$ with exactly $r$ exponentials. Here, $\psi$ is a function which only depends on $\tilde\bfy$, and not on either $f|_{\{0,1,\dots,n-1\}} $ or $\bfz$ individually. Moreover, $\psi$ does not need to be a computable function or implementable by a computationally tractable algorithm. Define the minimax error,
	\begin{equation}
		E_*(n,r,\epsilon)
		= \inf_{\psi} \,  \sup_{ f\in \calE(16\pi/n,r)} \,  \sup_{\| \bfz\|_2\leq \epsilon} \big\| f_\psi- f\big\|_{\ell^2_n},
	\end{equation}
	where the $\inf_{\psi}$ is taken over all possible functions of the observation $\bfy + \bfz$. 
	
	\begin{theorem} \label{thm:minimaxexponential}
		For any $r\geq 1$, $n\geq \max\{200,16r\}$, and $\alpha\leq 1$, we have 
		$$
		E_*(n,r,\alpha \sqrt n)\geq \frac 1 {400}\alpha \sqrt n. 
		$$
	\end{theorem}
	
	\begin{proof}
		Let $C=1/400$. Suppose for the purpose of deriving a contradiction, that there is a method $\psi$ such that
		\begin{equation}
			\label{eq:toogood4}
			\sup_{f\in \calE(16\pi/n,r)} \, \sup_{\| \bfz\|_2\leq \alpha \sqrt n} \| f_\psi-  f\|_{\ell_n^2}
			< C \alpha \sqrt n.
		\end{equation}
		Let $\bfy:=f|_{\{0,1,\dots,n-1\}}$ and $\bfy^\psi:=f_\psi|_{\{0,1,\dots,n-1\}}$. Let $(\bfx,\bfa)\in \calP(16\pi/n,r)$ be the parameters associated with $f$ and $(\bfx^\psi,\bfa^\psi)$ be the parameters associated with $f_\psi$. 
		
		Set $m:=\lceil n/2\rceil$. Importantly, since $f_\psi$ is an exponential sum with $r$ exponentials, by the Fourier matrix factorization \eqref{eq:hankelfactorization}, the Hankel matrix $H_\psi:=H(\bfy_\psi)\in \C^{m\times m}$ has rank $r$. Additionally, the $r$-dimensional left singular subspace $U_\psi\subset\C^m$ of $H_\psi$ is itself a Fourier subspace in $\C^m$. 
		
		Next, we control the distance between $U$ and $U_\psi$. For convenience, set $E:=H(\bfy^\psi)-H(\bfy)$. We first note that \eqref{eq:toogood4} yields
		\begin{equation}
			\| E\|_2
			\leq \| E\|_{\rm F}
			\leq \sqrt n \, \|\bfy^\psi-\bfy\|_2
			= \sqrt n \, \| f_\psi-  f\|_{\ell_n^2}
			< C\alpha n.			
			\label{eq:hankelhelp1}
		\end{equation}
		Note that $\Delta(\bfx)\geq 16\pi/n$ implies $\Delta(\bfx)\geq 8\pi /m$. The Hankel matrix factorization \eqref{eq:hankelfactorization} and inequality \eqref{eq:Fouriersigs} also yield
		\begin{equation}
			\sigma_r(H(\bfy))
			\geq \sigma_r^2(\Phi_m(\bfx))
			\geq \frac 34 m 
			\geq \frac 3 8 n. 
			\label{eq:hankelhelp2}
		\end{equation}
		By Wedin's sine-theta theorem \cite[Chapter V, Theorem 4.4]{stewart1990matrix}, inequality \eqref{eq:hankelhelp1}, \eqref{eq:hankelhelp2} and that $C\alpha \leq 1/400$, 
		$$
		\norm{\sin( U, U_\psi)}_2
		\leq \frac{\| E\|_2}{\sigma_r(H(\bfy))-\|E\|_2}
		< \frac{C \alpha n}{\sigma_r(H(\bfy))-\| E\|_2}
		\leq 4C  \alpha
		= \frac 1 {100} \alpha.
		$$
		
		Let us relate this result to the Fourier subspace problem described in \cref{sec:Fouriersubspace}. For any $(\bfx,\bfa)\in\calP(16\pi/n,r)$, given data of the form $\Phi_n(\bfx)\bfa+\bfz\in \C^n$ and arbitrary $\bfz$ such that $\|\bfz\|_2\leq \alpha\sqrt n$, we have produced a method $U_\psi$ such that estimates the Fourier subspace $U:=U(\bfx)$ better than the minimax lower bound in \cref{thm:Fouriersubspace2}. This is a contradiction, which completes the proof.  	
	\end{proof}
	
	Now we are ready to derive a two-sided bound for the minimax error which will also show that Gradient-MUSIC produces an exponential sum and its performance guarantee is optimal in $n$, the noise-to-signal ratio $\alpha^2$, and subspace dimension $r$.
	
	\begin{corollary}\label{cor:gradmusicexponential}
		There is an absolute constant $C>0$ such that the following hold. For any $r\geq 1$, $n\geq \max\{200,16r\}$, and sufficiently small $\alpha \leq 1$, we have
		\begin{equation*}
				\frac 1 {400} \alpha \sqrt n 
				\leq E_*\round{n,r,\alpha \sqrt{n}}
				\leq C\alpha \sqrt n. 
		\end{equation*}
	\end{corollary}
	
	\begin{proof}
		Pick $\alpha\leq 1$ sufficiently small so that the assumptions of \cref{thm:gradmusicexponentialsum} hold. The lower bound for $E_*$ follows immediately from \cref{thm:minimaxexponential}. For the upper bound, since Gradient-MUSIC is a method for this problem, by the second inequality of \cref{thm:gradmusicexponentialsum}, we get 
		$$
		E_*\round{n,r,\alpha \sqrt n}
		\leq \sup_{(\bfx,\bfa)\in \calP(16\pi/n,r)} \sup_{\|\bfz\|_2\leq \alpha \sqrt n} \, \norm{f_\sharp - f}_{\ell^2_n}
		\leq C\alpha \sqrt n. 
		$$	
		This finishes the proof. 
	\end{proof}
	
	\subsection{Discussion}
	\label{sec:discussion3}
	
	An obstruction of exponential sum approximation is that in the general case where $\bfx$ is arbitrary, the exponential sum $f$ is nonharmonic (e.g., $\bfx$ could be irrational). In particular, $f$ cannot be interpreted as a Fourier series, which would be case under an additional requirement that $\bfx$ lies on a grid of width $1/N$ for an integer $N$. This also precludes the use of sparse recovery methods that are typically encountered in the compressed sensing literature. Another obstruction is the additional requirement that both $f$ and an approximant $\tilde f$ consist of exactly $r$ exponentials. This is a sparsity constraint, and it makes the set of exponential sums with $r$ terms a nonlinear set. 
	
	It may be helpful to compare \cref{thm:gradmusicexponentialsum} with the classical Whittaker–Nyquist–Shannon sampling formula for bandlimited functions \cite[Theorem 3.10.1]{christensen2003introduction}. It states that any function $g\in L^2(\R)$ bandlimited to $[-\pi,\pi]$ (i.e., the Fourier transform of $g$ with the $e^{-i\xi t}$ convention vanishes outside $[-\pi,\pi]$) can be uniquely represented as a $L^2(\R)$ convergent series, 
	$$
	g(t)=\sum_{j\in \Z} g(j) \, {\rm sinc}(t-j), \wherespace {\rm sinc}(t):=\frac{\sin(\pi t)}{\pi t}.
	$$
	This is also an interpolation formula and is a linear map of the samples $g|_{\Z}$ to a bandlimited function in $L^2(\R)$. It is stable to perturbations: if $g|_\Z+z$ is used in the reconstruction formula for a $z\in \ell^2(\Z)$, then the $L^2(\R)$ approximation error is
	\begin{equation}\label{eq:shannonerror}
	\Bigg\| g - \sum_{j\in \Z} (g(j)+z_j) \, {\rm sinc}(\cdot-j) \Bigg\|_{L^2(\R)}
	= \Bigg\| \sum_{j\in \Z} z_j \, {\rm sinc}(\cdot-j) \Bigg\|_{L^2(\R)}
	=\|z\|_{\ell^2(\Z)},
	\end{equation}
	which is a consequence of $\{{\rm sinc}(\cdot-j)\}_{j\in\Z}$ being an orthonormal sequence in $L^2(\R)$. 
	
	There are similarities and differences between the aforementioned stability result \eqref{eq:shannonerror} and \cref{thm:gradmusicexponentialsum}, even though the sampling theorem cannot be applied to exponential sum $f\in \calE(h,r)$ because $f\not\in L^2(\R)$. Nonetheless, $f$ can be thought of as the Fourier transform of a discrete measure $\sum_{k=1}^r a_k \delta_{x_k}$ supported in $[-\pi,\pi]$. Due to the restriction that $f$ has exactly $r$ exponentials (hence $2r$ unknowns), $f$ can be recovered from $n$ noiseless samples whenever $n\geq 2r$ (a result which can be traced back to Prony \cite{prony1795essai}), rather than needing its samples on the lattice $\Z$. The main difference is that the set of exponential sums with $r$ terms is not a linear space, while the space of bandlimited functions is a linear subspace of $L^2(\R)$. Gradient-MUSIC provides a computationally efficient and nonlinear method for computing an approximate exponential sum $f_\sharp$. While \cref{thm:gradmusicexponentialsum} can be thought of as a nonlinear analogue of the Whittaker–Nyquist–Shannon sampling theorem, we have pointed out several major differences. 
	
	Other works also used ideas from spectral estimation for function approximation, such as \cite{beylkin2005approximation,plonka2019computation}. Those papers are considerably different from this one in two aspects. They do not provide performance guarantees in the presence of noise and they use exponential sum approximation as a subroutine for more complicated approximation problems. It would be interesting to determine how the results of this paper can be used in their settings. 

	\section{Numerical simulations}
	\label{sec:numerics}
	
	We compare the Gradient-MUSIC Toeplitz estimator with alternating projection, where the latter is a standard computational method for the matrix problem. See \cite{fannjiang2025optimality} for further details about Gradient-MUSIC and \cite{chu2003structured} for how alternating projection works. An implementation of the Gradient-MUSIC Toeplitz estimator and code that reproduces the results in this paper can be found on WL's Github repository.\footnote{\href{https://github.com/weilinlimath}{https://github.com/weilinlimath}} 
	
	To set up this experiment, for given matrix size $n$ and rank parameter $r$, we need to create various $ T$ and $ E$. We first need to select $\bfx$ of cardinality $r$ with $\Delta(\bfx)\geq 2\pi \beta/n$ for a chosen parameter $\beta>1$. To do this, we draw independent $\gamma_1,\dots,\gamma_r$ from the uniform distribution on $[-\beta/2,\beta/2]$ and consider the set 
	$$
	\bfx=\left\{ \frac{4\pi \beta j +2\pi \gamma_j}m \right\}_{j=1}^r.
	$$ 
	That is, our quasi-random set $\bfx$ is a perturbation of $r$ consecutive points separated by $2\pi(2\beta)$. Note that $\Delta(\bfx)\geq 2\pi\beta/m$ as desired, where it is possible for equality to hold. We chose a quasi-random method to generate $\bfx$ to avoid special properties that could potentially have special interactions with the Fourier transform.  
	
	For the amplitudes $\bfa$, we let $a_1,\dots,a_r$ be i.i.d. $\{\pm 1\}$ Rademacher random variables. Thus, we have specified some way of generating 
	$$
	(\bfx,\bfa)\in \calP\left( \frac{2\pi \beta} m, 1\right). 
	$$
	For such a pair, we set 
	$$ 
	T:=\Phi_n(\bfx)\diag(\bfa)\Phi_n(\bfx)^*.$$ 
	If $\beta\geq 4$, then
	$ T\in \calT(n,r).$ We introduced $\beta$ as a general parameter because we will perform some experiments where $\beta=2$, which lies outside the theory for Gradient-MUSIC. 
	
	To form $ E$, we let  $e_{-n+1},\dots,e_{n-1}$ be i.i.d. complex normal random variables with mean zero and variance $\sigma^2$. Let $ E$ be the Toeplitz matrix where $E_{j,k}=e_{j-k}$. As explained earlier in \cref{sec:toeplitz}, with probability $1-o(m)$, it holds that $\| E\|_2\lesssim \sigma\sqrt{n\log(n)}$, hence $\| E\|\leq \alpha n$ for large enough $n$ depending on $\alpha$ and $\sigma$. Also explained there, under favorable scenarios, $r$ can be correctly determined by singular value thresholding, so we assume that $r$ is known in our simulations. 
	
	Both the Gradient-MUSIC Toeplitz estimator and alternating projection need the correct rank when they compute the truncated singular value decomposition. The former only performs one truncated SVD up front, while the latter does this for each iteration. Also for alternating projection, we use 50 iterations or if the iterates stall (specifically, if the Frobenius norm between consecutive iterates are less than $10^{-4}\| T+ E\|_{\rm F}$). However, in our simulations, the second termination condition is rarely ever fulfilled, which is consistent with prior observations whereby alternating projection is seen to converge slowly due to ``ping pong" effects.  
	
	For each set of parameters $(n,r,\beta,\sigma)$, we repeat this experiment over 10 trials where $(\bfx,\bfa)$ and $ E$ are generated from the previously described process. We record two primary statistics: relative error and computational time in seconds. If $\tilde T$ is produced by some method, the relative error is defined as 
	$$
	\frac{\|\tilde T- T\|_{\rm F}}{\| T\|_{\rm F}}.
	$$
	For each statistic, we report both the average and max over these trials. All computations were done on a commercial machine with Apple M2 chip and 16 GB RAM. Since alternating projection does not always produce a rank $r$ Toeplitz matrix, we also record the median and max $\epsilon$-rank of the computed Toeplitz matrix. The $\epsilon$-rank of $ A$ is the number of singular values which are at least $\epsilon$ of the maximum singular value, namely, $$|\{k\colon \sigma_k( A)\geq \epsilon \sigma_1( A)\}|.$$
	
	\begin{table}[ht]
		\centering
		\begin{tabular}{|c|c|c|c|} \hline
			{\bf Gradient-MUSIC} &$(n,r)=(200,20)$ &$(n,r)=(500,50)$ &$(n,r)=(1000,100)$\\ \hline 
			average error  &0.00702 &0.00475 &0.00296 \\
			max error &0.00841 &0.00991 &0.00311 \\
			average time (sec) &0.0817 &0.472 &1.89 \\
			max time (sec) &0.122 &0.922 &2.24  \\ \hline 
			{\bf Alternating projection} &$(n,r)=(200,20)$ &$(n,r)=(500,50)$ &$(n,r)=(1000,100)$ \\ \hline 
			average error &0.00783 &0.00466 &0.00333 \\
			max error &0.00918 &0.00522 &0.00346\\
			average time (sec) &1.23 &12.1 &163 \\
			max time (sec) &1.36 &16.2 &179 \\ 
			median $10^{-6}$-rank &198 &479 &880 \\
			max $10^{-6}$-rank &200 &484 &893 \\ \hline 
		\end{tabular}
		\caption{Results of Experiment 1 where $\beta=4$ and $\sigma=0.1$.}
		\label{tab:exper1}
	\end{table}
	
	For Experiment 1, we fix $\beta=4$ and $\sigma=0.1$. The theory developed for the Gradient-MUSIC Toeplitz estimator in \cref{sec:toeplitz} is applicable. The result of this first experiment for variable $(n,r)$ is shown in \cref{tab:exper1}. We observed that Gradient-MUSIC and alternating projection have similar errors, except the latter is significantly more expensive. This behavior is expected because the main computational bottleneck in Gradient MUSIC is a (single) truncated SVD step (which corresponds to the $n^2r$ term in complexity cost \eqref{eq:computationalcost1}), whereas alternating projection uses a truncated SVD in each step of the iteration. That is, the complexity of Gradient-MUSIC Toeplitz estimator is comparable to just one iteration of alternating projection. Additionally, we see that alternating projection does not output the correct rank since the $10^{-6}$-rank is close to $n$. For those interested, we record the $10^{-2}$-rank in the subsequent experiment.   
	
	\begin{table}[ht]
		\centering
		\begin{tabular}{|c|c|c|c|} \hline
			{\bf Gradient-MUSIC} &$(n,r)=(200,40)$ &$(n,r)=(500,100)$ &$(n,r)=(1000,200)$\\ \hline 
			average error  &0.0686 &0.0452 &0.0346 \\
			max error &0.0728 &0.0643 &0.0516 \\
			average time (sec) &0.0545 &0.472 &2.66 \\
			max time (sec) &0.121 &0.716 &3.64  \\ \hline 
			{\bf Alternating projection} &$(n,r)=(200,40)$ &$(n,r)=(500,100)$ &$(n,r)=(1000,200)$ \\ \hline 
			average error &0.0766 &0.0470 &0.0338 \\
			max error &0.0808 &0.0507 &0.0346 \\
			average time (sec) &1.80 &23.3 &214 \\
			max time (sec) &1.96 &28.1 &230 \\ 
			median $10^{-2}$-rank &55.5 &102 &200 \\
			max $10^{-2}$-rank &61 &114 &201 \\ \hline 
		\end{tabular}
		\caption{Results of Experiment 2 where $\beta=2$ and $\sigma=1$.}
		\label{tab:exper2}
	\end{table}
	
	For Experiment 2, we pick $\beta=2$ and $\sigma=1$. This is a more difficult setup than Experiment 1 since we expect the Toeplitz approximation problem to be more sensitive to noise as $\beta$ approaches 1. Making $\beta$ smaller also allows us to make $|\bfx|$ (hence $r$) larger, since $\Delta\geq 2\pi \beta/m$ implies that $|\bfx|\leq m/\beta$. Additionally, Experiment 2 uses larger $\sigma$. We also record the $10^{-2}$-rank for alternating projection. The results of this second experiment for variable $(n,r)$ is shown in \cref{tab:exper2}. Again, we observed that Gradient-MUSIC and alternating projection have similar performance guarantees, but the former is significantly faster. Alternating projection does not output the correct rank, even if we examine the $10^{-2}$-rank. 
	
	We do not compare with structured total least norm \cite{rosen1996total,park1999low} due to its high computational cost. We also do not compare with \cite{musco2024sublinear} since their method is not easy to implement, and we were unable to find an implementation of their method on the internet.

	\section{Prospect and Discussion}
	\label{sec:conclusion}

This paper connects four themes: spectral estimation, Fourier subspace estimation, structured low-rank Toeplitz/Hankel approximation, and approximation of exponential sums.  The common algebraic object is the Fourier/Vandermonde matrix $
        \Phi_n(\bfx)=\bigl[e^{ijx_k}\bigr],
$
and the common analytical issue is the stability of this matrix under finite sampling.  In the
well-separated regime, where $\Delta(\bfx)\gtrsim n^{-1}$, this stability allows one to move
quantitatively between samples, subspaces, structured matrices, and frequencies.

The proofs also show that this role is not merely algorithmic.  The lower bounds for the other three problems are obtained by
connecting them to spectral estimation.  If any of the three  problems could be solved with a substantially better
rate than the minimax rate, then combining such a method with Gradient-MUSIC would yield a spectral estimation
algorithm that violates the minimax lower bound for spectral estimation.  This gives a quantitative transference principle:
in the separated regime, the difficulty of spectral estimation propagates to the associated
structured approximation problems.

\paragraph{Are the problems equivalent (in the well-separated case)?}

The results in this paper show that, in the well-separated case, spectral estimation provides
one direction of equivalence: together with standard subspace perturbation theory, it yields
optimal algorithms for the other three.  The minimax lower bounds show conversely that neither of these structured
approximation problems can be substantially easier than spectral estimation. 

The local bi-Lipschitz equivalence in \cref{prop:bilipschitz} proves that spectral and Fourier subspace estimation are equivalent in the setting of this paper. In a sufficiently small neighborhood of a well-separated $\bfx$, the local chart
\[
        \bfx\mapsto U(\bfx):=\range[\Phi_n(\bfx)]\subset\C^n
\]
is a well-conditioned coordinate chart for the sub-manifold of Fourier subspaces embedded in the Grassmannian, with
\begin{eqnarray}
        \|\sin(U(\bfx),U(\widetilde{\bfx}))\|_2
      &  \asymp&
        n\|\bfx-\widetilde{\bfx}\|_\infty\\
              \|\sin(U(\bfx),U(\widetilde{\bfx}))\|_{\rm F}
      &  \asymp&
        n\|\bfx-\widetilde{\bfx}\|_2. 
\end{eqnarray}

Then locally, frequency estimation and Fourier subspace estimation
are the same inverse problem written in two different coordinates, with the natural rescaling of
the loss by a factor of \(n\). Consequently, an optimal frequency estimator $\tilde\bfx$ induces an optimal Fourier subspace
estimator by 
$$
U(\tilde\bfx):=\range[\Phi_n(\tilde\bfx)].
$$
On the other hand, an optimal Fourier subspace
estimator \(\tilde U\) induces an optimal frequency estimator by applying a local inverse of the chart \( \bfx\mapsto U(\bfx)\).

For structured low-rank
Toeplitz/Hankel approximation and exponential sums, the relation to frequency estimation is less direct because
one must additionally estimate the amplitudes. 

\paragraph{The role of the truncated SVD.}

Our algorithms for the three structured approximation problems begin with a truncated singular value decomposition. For the Fourier subspace and exponential sum problems, the leading singular subspace of the Hankel lift
$H(y+z)$ approximates the Fourier subspace associated with the samples.  For the Toeplitz problem,
the leading singular subspace of $M=T+E$ approximates the Fourier subspace associated with
$T$.  

On one hand, the Eckart-Young-Mirsky theorem already implies the truncated SVD is an optimal low-rank approximation of a general matrix in both the spectral and Frobenius norms. On the other hand, structured approximation problems also incorporate additional constraints, so it is not clear if this first approximation procedure is even desirable. In the regime studied here, in all three approximation problems, the truncated SVD preserves all information needed for optimal structured recovery and suggests that it is not merely a heuristic denoising step.  

This observation may be useful beyond the present setting.  Many algorithms for inverse problems
already begin with a low-rank approximation or empirical subspace estimate.  The results here
give a setting in which such a reduction can be justified sharply: after the SVD, the remaining
task is to impose the correct nonlinear structure.

\paragraph{Beyond the well-separated regime.}

The present theory is deliberately focused on the well-separated regime.  This is the regime in
which Fourier/Vandermonde matrices are uniformly well-conditioned and there exist local stability results. Relaxing the separation condition is an important and
substantially harder direction.  When frequencies cluster below the Rayleigh length, which is on the order of $1/n$, the sampling
map remains identifiable in principle but becomes severely ill-conditioned.  In that regime, the
manifold of Fourier subspaces may develop highly anisotropic geometry, and the minimax rates
should depend on the cluster geometry rather than only on $n$ and the noise level.

\paragraph{Conclusion.}

The main contribution of this paper is a quantitative transference principle centered on spectral
estimation.  Sampling and interpolation by exponential sums generate low-rank Hankel and
Toeplitz structures; these structures determine Fourier subspaces; and spectral estimation converts
approximate subspaces into accurate frequencies.  Once the frequencies are recovered, both
structured low-rank matrix approximation and Fourier subspace estimation follow with optimal
rates.

	\section*{Acknowledgments}
	
	Both authors thank Wenjing Liao for insightful and valuable discussions. We also thank anonymous reviewers for their comments and suggestions which have greatly improved the scope and presentation of this manuscript. We are especially grateful for a reviewer's generous suggestion of the proof of \cref{lem:Phidifferencespectralnorm}, which led to improvements in \cref{thm:Fouriersubspace,thm:gdmusictoeplitz} of the first version of this paper. WL is partially supported by NSF-DMS Award \#2309602 and a PSC-CUNY Cycle 56 Grant.

	\bibliographystyle{plain}
	\bibliography{Toeplitzbib}

	\appendix

	\section{Proof of \cref{thm:maingradientMUSIC2}}
	\label{proof:maingradientMUSIC2}
	\begin{proof}
		Recall the two MUSIC functions defined in \eqref{eq:musicfunctions}. Since
		$$
		\|\sin(U, W)\|_2
		\leq \norm{\sin(U, W)}_{\rm F}
		\leq \frac 1 {100},
		$$
		the conclusions of \cref{thm:maingradientMUSIC} and the discussion thereafter hold. In view of those properties, let $\tilde\bfx$ be the $r$ smallest local minima of $\tilde q$ indexed to best match $\bfx$. 
		
		We first claim that the second part of the theorem follows from \eqref{eq:goal0}. Indeed, gradient descent converges linearly to $\tilde\bfx$ and using enough iterates produces $\bfx^\sharp$ such that
		$$
		\|\bfx^\sharp-\tilde \bfx\|_2
		\leq \frac {5} n \, \norm{\sin( U, W)}_{\rm F}. 
		$$
		Then by triangle inequality, we see that
		$$
		\|\bfx^\sharp-\bfx\|_2 
		\leq \|\bfx^\sharp-\tilde \bfx\|_2 + \|\tilde\bfx- \bfx\|_2
		\leq \frac {25} n \norm{\sin( U, W)}_{\rm F},
		$$
		which yields the theorem.
		
		To prove \eqref{eq:goal0}, we will invoke a multidimensional version of the mean value theorem, which requires some preparation. For convenience, we set $\varTheta:=\norm{\sin(U, W)}_{\rm F}$. Let $\bft=(t_1,\dots,t_r)\in \T^r$, and we view $\bfx=\{x_k\}_{k=1}^r$ as a coordinate $\bfx=(x_1,\dots,x_r)\in\T^r$. Define the functions $Q,\tilde Q\colon \T^r\to \R$ by
		\begin{align*}
			Q(\bft)&=q(t_1)+\cdots+q(t_r), \\
			\tilde Q(\bft)&=\tilde q(t_1)+\cdots+\tilde q(t_r).
		\end{align*}
		Note that each $x_k$ is a double root of $q$ and consequently,  
		$
		Q(\bfx)=0$  and $\nabla Q(\bfx)=\bfzero.$ Since each $\tilde x_k$ is a critical point of $\tilde Q$, we also have $\nabla Q(\tilde\bfx)=\bfzero$. The goal is to show that
		\begin{equation}\label{eq:goal}
			\nabla \tilde Q(\bft) \cdot (\bft-\bfx)\geq 0, \forallspace \|\bft-\bfx\|_2=\frac{20\varTheta} n
		\end{equation} 
		If this goal can be accomplished, then a higher dimensional intermediate value theorem \cite[Theorem 1]{morales2002bolzano} ensures that $\tilde Q$ has a critical point $\bfu=(u_1,\dots,u_r)$ such that
		\begin{equation}
			\label{eq:goalball}
			\|\bfu-\bfx\|_2\leq \frac{20\varTheta} n. 
		\end{equation}
		Now we argue that necessarily $\bfu=\tilde \bfx$. The discussion following \cref{thm:maingradientMUSIC} asserts that $\tilde x_k$ is the only critical point of $\tilde q$ in the interval $[\tilde x_k-4\pi/(3n),\tilde x_k + 4\pi/(3n)]$. By \eqref{eq:minimaperturbation}, \eqref{eq:goalball}, and the assumption $\varTheta\leq 0.01$, we have
		$$
		\max_k |\tilde u_k- \tilde x_k|
		\leq \|\tilde \bfu- \bfx\|_2 + \max_k |x_k- \tilde x_k|
		\leq \frac{27\varTheta}{n}
		\leq \frac{0.27}{n}. 
		$$
		This proves that $\bfu=\tilde\bfx$.
		
		The rest of the proof shows that \eqref{eq:goal} holds. Let $\alpha\in [1,5]$ and we will give a parameter tuning argument based on $\alpha$. Let $\bft\in \T^r$ such that  $\|\bft-\bfx\|_2=2\pi \alpha \varTheta/n$. By Taylor's theorem applied to $q'$ around $x_k$ and that $q'(x_k)=0$, there is a $\xi_k$ between $x_k$ and $t_k\in \T$ such that 
		$$
		q'(t_k)=q''(x_k)(t_k-x_k)+\frac 12 q'''(\xi_k)(t_k-x_k)^2. 		
		$$
		This yields
		\begin{align*}
			\nabla Q(\bft)\cdot (\bft-\bfx)
			&=\sum_{k=1}^r q''(x_k) (t_k-x_k)^2 + \frac 12 \sum_{k=1}^r q'''(\xi_k) (t_k-x_k)^3 \\
			&\geq \left(\min_{k=1,\dots,r} q''(x_k) \right) \|\bft-\bfx\|_2^2 -  \left(\max_{k=1,\dots,r} \frac 12 \, |q'''(\xi_k)||t_k-x_k| \right) \|\bft-\bfx\|_2^2. 
		\end{align*}
		By Bernstein's inequality applied to the trigonometric polynomial $(\tilde q-q)'$ which has degree $n$ and that the $k$-th column of $\Phi_n(\bft)$ is $\sqrt n \, \bfphi(t_k)$, we get
		\begin{align*}
			|\tilde q \, '(t_k)-q'(t_k)|
			&\leq n \, |\tilde q (t_k)-q(t_k)| \\
			&=n \, |\bfphi(t_k)^* (P_W-P_U)\bfphi(t_k)| 
			\leq n \|(P_W-P_U) \bfphi(t_k)\|_2.  
		\end{align*}
		Using this inequality and the relationship between $P_W-P_U$ and Frobenius distance \eqref{eq:Chordal}, we see that 
		\begin{align*}
			\Big\|\nabla \tilde Q(\bft)-\nabla Q(\bft)\Big\|^2_2
			&= \sum_{k=1}^r |\tilde q \, '(t_k)-q'(t_k)|^2 \\
			&\leq n^2 \sum_{k=1}^r \|(P_W-P_U) \bfphi(t_k)\|_2^2 
			=n \left\|(P_W-P_U)\Phi_n(\bft) \right\|_{\rm F}^2 
			\leq 2n \varTheta^2 \|\Phi_n(\bft)\|_2^2.  
		\end{align*}
		Since $\Delta(\bfx)\geq 8\pi/n$ and $2\pi\alpha\varTheta\leq 2\pi(0.05)$, we have $\Delta(\bft)\geq (8\pi-2\pi(0.1))/n$. By \eqref{eq:Fouriersigs}, we have
		$$
		\|\Phi_n(\bft)\|_2
		\leq C_1\sqrt n, \wherespace C_1:= \sqrt{1+(3.9)^{-1}} .
		$$
		Using the above inequalities, we see that 
		\begin{equation} \label{eq:Qlowerhelp1} 
			\begin{split}
			\nabla \tilde Q(\bft)\cdot (\bft-\bfx)
			&\geq \nabla Q(\bft)\cdot (\bft-\bfx) - \left\|\nabla Q(\bft)- \nabla \tilde Q(\bft)\right\|_2 \|\bft-\bfx\|_2 \\
			&\geq \frac 1 {n^2}\left(\min_{k=1,\dots,r} q''(x_k) - \max_{k=1,\dots,r} \frac 12 \, |q'''(\xi_k)||t_k-x_k| \right) 4\pi^2\alpha^2 \varTheta^2  - 2 \pi \alpha 2 C_1 \varTheta^2.
		\end{split}
		\end{equation}
		We will argue that $\alpha=20/(2\pi)$ ensures that the right hand side of \eqref{eq:Qlowerhelp1}  is nonnegative, thereby proving \eqref{eq:goal}. 
		
		To get explicit control over terms involving the second and third derivatives of $q$ in inequality \eqref{eq:Qlowerhelp1}, the main machinery developed in \cite[Section 9]{fannjiang2025optimality} is an approximation of $q$ (and its derivatives) by shifts of the normalized Fej\'er kernel (and its derivatives), denoted
		$$
		f_n(t) := \frac 1 {n^2} \left( \frac{\sin(nt/2)}{\sin(t/2)}\right)^2. 
		$$
		The approximation error between $q^{(\ell)}(t)$ by $-f_n^{(\ell)}(t-x_k)$ for all $t$ sufficiently close to $x_k$ involve various explicit expressions and/or integrals. From here onward, set $T_\ell:=T_\ell(100,4-\alpha\varTheta,4)$ for $\ell\in \{0,1,2,3\}$ and $C_0:= C_0(100,4)$, where all of these quantities are explicitly defined in \cite[Section 9]{fannjiang2025optimality} and can be numerically evaluated to arbitrary precision. Then the results there show that
		\begin{align*}
			\min_k q''(x_k) 
			&\geq C_0 n^2, \\
			|q'''(t_k)+f_n'''(t_k-x_k)|
			&\leq \big(2T_0T_3+6T_1T_2 \big)n^3. 
		\end{align*}
		Note that $\|f'''_n\|_{L^\infty(\T)}\leq n^3/10$, which follows from direct computation using the Fourier series representation of $f_n$. Also recall that $\xi_k$ is between $x_k$ and $t_k$ and $|x_k-t_k|\leq \|\bfx-\bft\|_2=2\pi \alpha \varTheta/n$. Then we get that
		\begin{align*}
			\frac 12 |q'''(\xi_k)| |t_k-x_k|
			&\leq \frac 12 \Big(|f_n'''(\xi_k-x_k)| + \big(2T_0T_3+6T_1T_2 \big)n^3 \Big) |t_k-x_k| \\
			&\leq\left( \frac 1 {20}+  T_0T_3+3T_1T_2 \right)2\pi \alpha\varTheta n^2 . 
		\end{align*}
		Then inserting this into \eqref{eq:Qlowerhelp1}, we end up with the lower bound
		\begin{align*}
			\nabla \tilde Q(\bft)\cdot (\bft-\bfx)
			&\geq \left( C_0 - \left( \frac 1 {20}+  T_0T_3+3T_1T_2 \right)2\pi \alpha\varTheta  \right) 4\pi^2\alpha^2 \varTheta^2  - 2 \pi \alpha 2C_1 \varTheta^2.
		\end{align*}
		Finally, we set $\alpha=20/(2\pi)$ and numerically evaluate $T_0,T_1,T_2,T_3,C_0,C_1$ to see the right hand side is positive. This finishes the proof of \eqref{eq:goal} and consequently, the theorem.
	\end{proof}

	\section{Stability estimates} 
	
	\subsection{Local stability of Fourier matrices}

	\begin{lemma}
		\label{lem:Phidifferencespectralnorm}
		For $n\geq 2r$ and $\beta \geq 2$, suppose $\bfx,\tilde\bfx\subset \T$ are both sets of cardinality $r$ such that $\Delta(\bfx)\geq 2 \pi \beta/n$ and $\|\bfx-\tilde\bfx\|_\infty \leq \pi \beta/(2n)$. Then 
		\begin{align*}
			\big\|\Phi_n(\bfx)-\Phi_n(\tilde\bfx) \big\|_2
			&\leq \sqrt{\frac{\beta + 2}{\beta}} \, n^{3/2} \,  \|\bfx-\tilde\bfx\|_\infty, \\
			\big\|\Phi_n(\bfx)-\Phi_n(\tilde\bfx) \big\|_{\rm F}
			&\leq \sqrt{\frac{\beta + 2}{\beta}} \, n^{3/2} \,  \|\bfx-\tilde\bfx\|_2. 
		\end{align*}
	\end{lemma}
	\begin{proof}		
		Suppose $\bfx$ and $\tilde\bfx$ have been indexed so that their matching distance is minimized with respect to this ordering. For any $t\in [0,1]$ and $k=1,\dots,r$, define 
		$$
		x_{t,k}:=x_k + t \, ( \tilde x_k- x_k), \quad 
		\bfx_t := \{x_{t,k}\}_{k=1}^r, \andspace
		D := \diag(0,i,\dots,i(n-1)).
		$$
		A calculation shows that 
		$$
		\bfphi'(x_{t,k}) = D \bfphi(x_{t,k}).
		$$
		Importantly, this formula holds for each $t$ and $k$, with the same matrix $ D$. By the fundamental theorem of calculus,
		$$
		\bfphi( \tilde x_k)-\bfphi(x_k)
		= \int_0^1 \frac d {dt} (\bfphi(x_{t,k})) \, dt
		= \int_0^1  D \bfphi(x_{t,k}) ( \tilde x_k-x_k)\, dt.	
		$$
		This formula now implies 
		\begin{equation}
			\label{eq:intformula}
			\Phi_n(\tilde\bfx)-\Phi_n(\bfx)
			= \int_0^1  D\Phi_n(\bfx_{t}) \diag(\tilde\bfx-\bfx) \, dt.
		\end{equation}
		By a direct computation, we see that 
		$$
		\Delta(\bfx_t)
		\geq \Delta(\bfx)- 2t \|\tilde\bfx-\bfx\|_\infty
		\geq \frac{2\pi \beta}{n} - \frac{\pi \beta}{n}
		= \frac{\pi \beta}{n}.
		$$
		Note this lower bound is uniform in $t$ and \eqref{eq:Fouriersigs} implies
		$$
		\|\Phi_n(\bfx_t)\|_2
		\leq \sqrt{(1+2\beta^{-1}) n}. 
		$$
		This, together with \eqref{eq:intformula} and that $\| D\|_2\leq n$, shows that
		\begin{align*}
			\big\|\Phi_n(\tilde\bfx)-\Phi_n(\bfx) \big\|_2
			&\leq \| D\|_2 \left(\sup_{t\in [0,1]} \|\Phi_n(\bfx_{t})\|_2 \right) \|\diag(\tilde\bfx-\bfx)\|_2
			\leq \sqrt{\frac{\beta+2}{\beta}} \, n^{3/2} \,  \|\tilde\bfx-\bfx\|_\infty, \\
			\big\|\Phi_n(\tilde\bfx)-\Phi_n(\bfx) \big\|_{\rm F}
			&\leq \| D\|_2 \left(\sup_{t\in [0,1]} \|\Phi_n(\bfx_{t})\|_2 \right) \|\diag(\tilde\bfx-\bfx)\|_{\rm F}
			\leq \sqrt{\frac{\beta+2}{\beta}} \, n^{3/2} \,  \|\tilde\bfx-\bfx\|_2. 
		\end{align*}
	\end{proof}
	
	\subsection{Stability of quadratic method for amplitudes}
	The next result deals with reconstruction of amplitudes using the following ``quadratic" method. Say $A:=\diag(\bfa)\in \C^{r\times r}$, $\Phi:=\Phi_n(\bfx)$, and $\tilde \bfx$ is a given set that approximates $\bfx$. Let $E\in \C^{n\times n}$ be arbitrary. Consider the quantities 
	\begin{equation}
		\label{eq:quadratic}
		\tilde \Phi:= \Phi_n(\tilde \bfx), \quad 
		\tilde A := \tilde\Phi^\dagger  \big(\Phi A \Phi^* + E \big) \tilde\Phi ^{\dagger,*}, \andspace
		\tilde \bfa := \diag( \tilde A).
	\end{equation}
	Note that any permutation on $\tilde \bfx$ induces the same re-ordering of $\tilde \bfa$. Hence we always index $\tilde\bfx$ to best match $\bfx$, and that same ordering is used to index $\tilde\bfa$.

	\begin{lemma}
		\label{lem:amplitudes}
		For $n\geq 2r$ and $\beta \geq 2$, suppose $\bfx$ and $\tilde\bfx$ are sets of cardinality $r$ such that $\Delta(\bfx)\geq 2 \pi \beta/n$ and $\|\tilde\bfx-\bfx\|_\infty \leq \pi \beta/(4n)$. For any matrix $E\in\C^{n\times n}$, the vector $\tilde \bfa$ defined in \eqref{eq:quadratic} satisfies 
		\begin{align*}
			\|\bfa - \tilde \bfa\|_\infty
			&\lesssim n \, \|\bfa\|_\infty \|\bfx-\tilde\bfx\|_\infty + \frac {\|E\|_2} n,\\
			\|\bfa - \tilde\bfa\|_2
			&\lesssim n \, \|\bfa\|_\infty \|\bfx-\tilde\bfx\|_2 + \frac {\|E\|_{\rm F}} n.
		\end{align*}
	\end{lemma}	
	
	\begin{proof}
		Since
		\begin{align*}
			\|\bfa-\tilde \bfa\|_\infty
			&= \max_k \big|\bfe_k^* \big( A- \tilde A \big) \bfe_k \big|_2
			\leq \big\| A -\tilde A \big\|_2, \\
			\|\bfa-\tilde \bfa\|_2
			&= \left(\sum_{k=1}^r \big| A_{k,k}- \tilde A_{k,k} \big|_2^2 \right)^{1/2}
			\leq \big\| A -\tilde A \big\|_{\rm F},
		\end{align*}
		it suffices to control $A-\tilde A$ in both the spectral and Frobenius norms. We start with the formulas
		\begin{align*}
			\tilde\Phi^\dagger \Phi -  I 
			&= (\tilde\Phi^* \tilde\Phi)^{-1} \tilde\Phi^* (\Phi - \tilde\Phi),  \\
			\tilde A - A 
			&= \tilde\Phi^\dagger  \big(\Phi A \Phi^* + E \big) \tilde\Phi ^{\dagger,*}-A \\
			&=\big(\tilde\Phi^\dagger \Phi \big)  A \big(\tilde\Phi^\dagger \Phi -  I\big)^* + \big( \tilde\Phi^\dagger \Phi -  I\big)  A + \tilde\Phi^\dagger  E \tilde\Phi^{\dagger,*} \\
			&= \big(\tilde\Phi^\dagger \Phi \big)  A (\Phi - \tilde\Phi )^* \tilde\Phi (\tilde\Phi^* \tilde\Phi)^{-1}  +  (\tilde\Phi^* \tilde\Phi)^{-1} \tilde\Phi^* (\Phi - \tilde\Phi ) A + \tilde\Phi^\dagger  E \tilde\Phi^{\dagger,*} . 
		\end{align*}
		By our assumptions on $\bfx$ and $\tilde\bfx$, we have 
		$$
		\Delta(\tilde \bfx) \geq \Delta(\bfx)-2\|\bfx - \tilde\bfx\|_\infty
		\geq \frac{2\pi \beta} n - \frac{\pi \beta}{2n}
		\geq \frac{2\pi (3/2)}{n}.
		$$
		This allows us to use \eqref{eq:Fouriersigs} to see that 
		\begin{align*}
			\max\left\{\|\Phi \|_2, \, \|\tilde\Phi\|_2 \right\}
			&\lesssim {\sqrt n}, \andspace
			\max\left\{\|\Phi^\dagger \|_2, \, \|\tilde\Phi^\dagger\|_2 \right\}
			\lesssim \frac 1{\sqrt n}.
		\end{align*}
		Using the above results, we see that
		\begin{align*}
			\big\|\tilde A - A\big\|_2
			&\lesssim \frac 1 {\sqrt n} \, \|A\|_2 \,  \|\Phi-\tilde\Phi\|_2 + \frac {\|E\|_2} n, \\
			\big\|\tilde A - A \big\|_{\rm F}
			&\lesssim \frac 1 {\sqrt n} \, \|A\|_2 \, \|\Phi-\tilde\Phi\|_{\rm F} + \frac {\|E\|_{\rm F}} n.  
		\end{align*}
		Recalling that $\|A\|_2=\|\bfa\|_\infty$ and \cref{lem:Phidifferencespectralnorm} completes the proof.
	\end{proof}
	
	Formula \eqref{eq:quadratic} and \cref{lem:amplitudes} are written for the Toeplitz case since $\Phi A \Phi^*$ is a Toeplitz matrix. For the Hankel case, we replace $\Phi A\Phi^*$ with $\Phi A\Phi^T$ and $\tilde\Phi ^{\dagger,*}$ with $\tilde\Phi ^{\dagger,T}$ in definition \eqref{eq:quadratic}. Then \cref{lem:amplitudes} holds verbatim. 
	
	\subsection{Proof of \cref{prop:stability1}}
	\label{proof:stability1}
	
	\begin{proof}
		The proof is essentially a direct calculation. For convenience, let $\Phi:=\Phi_n(\bfx)$ and $\tilde \Phi:=\Phi_n(\tilde \bfx)$. From relationships \eqref{eq:Chordal} and \eqref{eq:sinetheta}, it suffices to control $P(\bfx)-P(\tilde \bfx)$. This can be expressed as
		\begin{align*}
			P(\bfx)-P(\tilde \bfx)
			&= \Phi(\Phi^* \Phi)^{-1} \Phi^* - \tilde \Phi(\tilde \Phi^* \tilde \Phi)^{-1} \tilde \Phi^* \\
			&= (\Phi- \tilde \Phi)(\Phi^* \Phi)^{-1} \Phi^* + \tilde \Phi(\Phi^* \Phi)^{-1} (\tilde \Phi^* \tilde\Phi - \Phi^* \Phi) (\tilde \Phi^* \tilde \Phi)^{-1} \Phi^* + \tilde \Phi(\tilde \Phi^* \tilde \Phi)^{-1} (\Phi-\tilde \Phi)^* \\
			&= (\Phi- \tilde \Phi)(\Phi^* \Phi)^{-1} \Phi^* + \tilde \Phi(\Phi^* \Phi)^{-1} \tilde \Phi^* (\tilde\Phi - \Phi) (\tilde \Phi^* \tilde \Phi)^{-1} \Phi^* \\
			&\quad + \tilde \Phi(\Phi^* \Phi)^{-1} (\tilde \Phi - \Phi)^* \Phi (\tilde \Phi^* \tilde \Phi)^{-1}  \Phi^* + \tilde \Phi(\tilde \Phi^* \tilde \Phi)^{-1} (\Phi-\tilde \Phi)^*. 
		\end{align*}
		Since $\Delta(\tilde\bfx)\geq \Delta(\bfx)-2\|\bfx-\tilde\bfx\|_\infty \geq 2\pi (3\beta/4)/n \geq 2\pi(3/2)/n$, we use \eqref{eq:Fouriersigs} to see that
		\begin{align*}
			\max\left\{ \|\Phi\|_2, \|\tilde \Phi\|_2\right\} &\lesssim \sqrt n, \\
			\max\left\{ \|(\Phi^* \Phi)^{-1}\|_2, \|(\tilde \Phi^* \tilde \Phi)^{-1}\|_2\right\} 
			&\lesssim \frac 1 n. 
		\end{align*}
		Thus, we have
		\begin{align*}
			\|P(\bfx)-P(\tilde \bfx)\|_2 
			&\lesssim \frac 1 {\sqrt n} \|\Phi- \tilde \Phi\|_2, \\
			\|P(\bfx)-P(\tilde \bfx)\|_{\rm F} 
			&\lesssim \frac 1 {\sqrt n} \|\Phi- \tilde \Phi\|_{\rm F}. 
		\end{align*}
		Additionally, $\Phi-\tilde \Phi$ can be controlled in both the spectral and Frobenius norms by \cref{lem:Phidifferencespectralnorm}, and note that $\sqrt{(\beta+2)/\beta}\leq \sqrt 2$ for all $\beta\geq 2$. 
	\end{proof}
	
	\subsection{Proof of \cref{prop:stability2}}
	\label{proof:stability2}
	
	\begin{proof}
		We first concentrate on the spectral norm estimate. We start with the calculation, 
		\begin{align*}
			T(\tilde \bfx,\tilde\bfa)- T(\bfx,\bfa)
			&=\tilde\Phi \, \tilde A \, \tilde\Phi^* - \Phi \, A \, \Phi^* \\
			&= \tilde\Phi\big( \tilde A - A \big) \tilde\Phi^*  + \tilde \Phi A \big(\tilde \Phi^* - \Phi^* \big) + \big( \tilde\Phi  - \Phi \big) A \,  \Phi^*. \\
		\end{align*}
		Then by triangle inequality, 
		\begin{align*}
			\norm{T(\tilde \bfx,\tilde\bfa)- T(\bfx,\bfa)}_2				
			&\leq \|\tilde\Phi \|_2  \big\| \tilde\bfa - \bfa\big\|_\infty \|\tilde\Phi \|_2 + \|\tilde\Phi \|_2 \|\bfa\|_\infty \big\| \tilde\Phi - \Phi \big\|_2 + \big\|  \tilde\Phi -\Phi \big\|_2  \|\bfa\|_\infty \|\Phi\|_2, \\
			\norm{T(\tilde \bfx,\tilde\bfa)- T(\bfx,\bfa)}_{\rm F}
			&\leq \|\tilde\Phi \|_2  \big\| \tilde\bfa - \bfa\big\|_2 \|\tilde\Phi \|_2 + \|\tilde\Phi \|_2 \|\bfa\|_\infty \big\| \tilde\Phi - \Phi \big\|_{\rm F} + \big\|  \tilde\Phi -\Phi \big\|_{\rm F}  \|\bfa\|_\infty \|\Phi\|_2. 
		\end{align*}
		To complete the proof, we use that $\|\bfa\|_\infty\leq 10$ by assumption,     together with \cref{lem:Phidifferencespectralnorm} (note that $\sqrt{(\beta+2)/\beta}\leq \sqrt 2$ for all $\beta\geq 2$), and \eqref{eq:Fouriersigs} to see that $\max\{\|\Phi\|_2, \|\tilde\Phi\|_2\}\lesssim \sqrt n$.
	\end{proof}
	
	\subsection{Proof of \cref{prop:stability3}}
	\label{proof:stability3}
	
	\begin{proof}
		We can re-index $\tilde \bfx$ such that the matching distance between $\bfx$ and $\tilde\bfx$ is minimized. For $s\in [0,1]$, define $x_{s,k}= x_k+s(\tilde x_k-x_k)$ and let $\bfx_s=\{x_{s,k}\}_{k=1}^r$. Note that uniformly in $s\in[0,1]$,
		\begin{equation}
			\label{eq:sephelp1}	
			\Delta(\bfx_s)
			\geq \Delta(\bfx)-2s\|\bfx-\tilde\bfx\|_\infty
			\geq \frac{2\pi \beta}{n}-\frac{\pi \beta}{n}
			\geq \frac{2\pi}n.
		\end{equation}
		By the fundamental theorem of calculus, we have
		\begin{align*}
			\tilde f(t) - f(t)
			&= \sum_{k=1}^r (\tilde a_k- a_k) e^{ i\tilde x_k t} + \sum_{k=1}^r a_k \left( e^{ i \tilde x_k t}-e^{ ix_k t}\right) \\
			&= \sum_{k=1}^r (\tilde a_k- a_k) e^{ i\tilde x_k t} + \sum_{k=1}^r a_k \int_0^1 e^{ ix_{s,k} t} \,  i t(\tilde x_k- x_k) \, ds \\
			&= \sum_{k=1}^r (\tilde a_k- a_k) e^{ i\tilde x_k t} + \int_0^1 t  \sum_{k=1}^r  i a_k (\tilde x_k- x_k) e^{ ix_{s,k} t} \, ds. 
		\end{align*}
		For the first term, we can directly use \eqref{eq:beurling2} (justified by \eqref{eq:sephelp1}) to see that
		$$
		\Bigg(\int_0^n \Big| \sum_{k=1}^r ( \tilde a_k- a_k) e^{ i\tilde x_k t} \Big|^2 \Bigg)^{1/2}
		\lesssim \sqrt n \, \|\tilde \bfa- \bfa\|_2.
		$$
		For the second term, for each $s\in [0,1]$, it is convenient to define the exponential sum 
		$$
		f_s(t):=\sum_{k=1}^r  i a_k (\tilde x_k- x_k) e^{ ix_{s,k} t}. 
		$$
		By the Minkowski integral inequality and \eqref{eq:beurling2} (which is justified by \eqref{eq:sephelp1}), 
		\begin{align*}
			\left(\int_0^n \Big| \int_0^1 t f_s(t)\, ds \Big|^2 \, dt\right)^{1/2}
			&\leq \int_0^1 \left( \int_0^n t^2 |f_s(t)|^2 \, dt \right)^{1/2} \, ds\\
			&\leq \int_0^1 n \left( \int_0^n |f_s(t)|^2 \, dt \right)^{1/2} \, ds \\
			&\lesssim n^{3/2}  \left(\sum_{k=1}^r |a_k(\tilde x_k-x_k)|^2 \right)^{1/2} \\
			&\lesssim n^{3/2} \|\bfa\|_\infty \, \|\tilde \bfx-\bfx\|_2.
		\end{align*}
		Combining the above completes the proof of 
		the first statement of the proposition. The second assertion of this proposition follows by the same argument except we use \eqref{eq:Fouriersigs} instead of \eqref{eq:beurling2}. 
	\end{proof}

\end{document}